\pgfplotsset{compat=1.16}
\newcommand{\R}{\mathbb{R}}
\newcommand{\C}{\mathbb{C}}
\newcommand{\Z}{\mathbb{Z}}
\renewcommand{\P}{\mathbb{P}}
\renewcommand{\vec}[1]{\bar{\boldsymbol{#1}}}
\newcommand{\x}{\vec{x}}
\newtheoremstyle{theorems}
  {3pt}
  {3pt}
  {\itshape}
  {}
  {\bfseries}
  {.}
  { }
  {}
\newtheoremstyle{proofparts}
  {3pt}
  {0pt}
  {}
  {\parindent}
  {\scshape}
  {:}
  {\newline}
  {}
\newtheoremstyle{claims}
  {2pt}
  {2pt}
  {}
  {\parindent}
  {\bfseries}
  {.}
  { }
  {}
\theoremstyle{theorems}
\newtheorem{thm}{Theorem}[section]
\newtheorem{cor}[thm]{Corollary}
\newtheorem{prop}[thm]{Proposition}
\theoremstyle{definition}
\newtheorem{defn}[thm]{Definition}
\newtheorem{ex}[thm]{Example}
\newtheorem{remark}[thm]{Remark}
\theoremstyle{proofparts}
\theoremstyle{claims}
\newtheorem*{claim*}{Claim}
\let\Sectionmark\sectionmark
\def\sectionmark#1{\def\Sectionname{#1}\Sectionmark{#1}}
\let\Subsectionmark\subsectionmark
\def\subsectionmark#1{\def\Subsectionname{#1}\Subsectionmark{#1}}
\newcommand{\abs}[1]{\left\vert #1 \right\vert}
\DeclareMathOperator{\diam}{diam}
\DeclareMathOperator{\supp}{supp}
\DeclareMathOperator{\tRe}{Re}
\DeclareMathOperator{\curl}{curl}
\let\div\relax\DeclareMathOperator{\div}{div}
\definecolor{emphcolor}{rgb}{0,0,1}           
\newcommand{\ip}[2]{\left\langle #1 \middle\vert #2 \right\rangle}
\newcommand{\ud}{\,\textnormal{d}}
\newcommand{\dd}[1]{\frac{\textnormal{d}}{\textnormal{d} #1}}
\let\epsilon\varepsilon
\let\varepsilon\epsilon
\let\eps\epsilon
\title{Floating Wigner Crystal and Periodic Jellium Configurations}
\author{Asbjørn Bækgaard Lauritsen\footnote{
\stackunder{
\href{mailto:asbjornbaekgaard.lauritsen@ist.ac.at}{\nolinkurl{asbjornbaekgaard.lauritsen@ist.ac.at}}
}{
IST Austria (Institute of Science and Technology Austria), Am Campus 1, 3400 Klosterneuburg, Austria
}
}}
\begin{document}
\maketitle
\thispagestyle{empty}
\begin{abstract}
Extending on ideas of Lewin, Lieb and Seiringer (Phys Rev B, 100, 035127, (2019))
we present a modified ``floating crystal'' trial state for Jellium 
(also known as the classical homogeneous electron gas)
with density equal to a characteristic function. 
This allows us to show that three definitions of
the Jellium energy coincide in dimensions $d\geq 2$,
thus extending the result of Cotar and Petrache (arXiv: 1707.07664) and
Lewin, Lieb and Seiringer (Phys Rev B, 100, 035127, (2019))
that the three definitions coincide in dimension $d \geq 3$.
We show that the Jellium energy is also equivalent to a ``renormalized energy''
studied in a series of papers by Serfaty and others
and thus, by work of Bétermin and Sandier (Constr Approx, 47:39-74, (2018)),
we relate the Jellium energy to the order $n$ term in the logarithmic energy of $n$ points on the unit 2-sphere.
We improve upon known lower bounds for this renormalized energy.
Additionally, we derive formulas for the Jellium energy 
of periodic configurations.
\\ \\
{
	\bfseries
	Mathematics subject classification: 
}
82B05.
\\
{
	\bfseries
	Keywords:
}
Jellium, Statistical mechanics, Renormalized energy, Triangular lattice.
\end{abstract}
\section{Introduction}
The Jellium model is an important and very simple model, 
which models electrons in a uniformly charged background.
In this paper we 
discuss three different definitions of the Jellium energy density $e_\text{Jel}$.
These are known to 
coincide in dimensions $d \geq 3$ \cite{lewin.lieb.seiringer.19,cotar.petrache.2019}
and differ in dimension $d=1$ \cite{colombo.pascale.di-marino.2015,lewin.lieb.seiringer.19,choquard.75,kunz.74,baxter.63}.
We show that they coincide in dimension $d\geq 2$
using a similar method  as in \cite{lewin.lieb.seiringer.19}.
This verifies a conjecture by Cotar and Petrache \cite[Remark 1.7]{cotar.petrache.2019},
that the definitions of the Jellium energy density coincide in dimensions $d=2$.
The main difference in our argument as compared to that of \cite{lewin.lieb.seiringer.19} 
is the choice of a slightly different 
``floating crystal'' trial state, for which the density is a characteristic function. 
In dimension $d=3$ the thermodynamic limit of the Uniform Electron Gas 
exists under weaker conditions by the Graf-Schenker inequality \cite{graf.schenker.1995} as discussed in 
\cite{lewin.lieb.seiringer.18,lewin.lieb.seiringer.19}.
We do not have this stronger form of the thermodynamic limit in dimensions $d\ne 3$. 
Hence we need a trial state, for which the density is a characteristic function.

Secondly we consider 
the question of periodic Jellium configurations, where the electrons are confined to 
sites of a lattice.
Here we find that the energy is given by the Epstein (or lattice) $\zeta$-function
associated to the lattice, on which the electrons sit.

Thirdly we relate the evaluation of the Jellium energy to that of a ``renormalized energy'' $W$
studied in \cite{borodin.serfaty.13,sandier.serfaty.15,sandier.serfaty.12,rougerie.sefaty.16,petrache.serfaty.2017,betermin.sandier.18,leble.serfaty.17,cotar.petrache.2019}.
Here we show that $\min_{\mathcal{A}_1} W = 2\pi e_\text{Jel}$ (notation explained in \cref{sect.renorm.energy}). 
This gives another equivalent definition of the Jellium energy.
Bétermin and Sandier \cite{betermin.sandier.18} 
have shown that the logarithmic energy on $n$ points on $\mathbb{S}^2$ has a term of order $n$ given by
$c_\text{log} = \frac{1}{\pi} \min_{\mathcal{A}_1}W + \frac{\log 4\pi}{2}$,
(see \Cref{remark.log.s2}).
Hence we get yet another equivalent definition of the Jellium energy.
The relation of $e_\text{Jel}$ and $\min_{\mathcal{A}_1}W$ 
carry over the known bounds for the Jellium energy.
The lower bound of $e_\text{Jel} \geq -0.66118$ by Lieb and Narnhofer \cite{lieb.narnhofer.75} and Sari and Merlini \cite{sari.merlini.76}
has been known for many years. 
It improves upon known bounds for the constants $c_\text{log}$ and $\min_{\mathcal{A}_1} W$.
In particular it gives the bound $-0.0569 \leq c_\text{log} \leq -0.0556$
improving on the best known lower bound of $c_\text{log}\geq -0.0954$ 
due to Steinerberger \cite{steinerberger.20}.
Since the proof of the lower bound in \cite{sari.merlini.76} is not very detailed,
we give the proof in the appendix.

\section{Three Definitions of the Jellium Energy}
We now introduce the three models. 
We will give the argument only in dimension $d=2$, 
partly because this case is where the argument is most complicated
and partly because the physically interesting cases are dimensions $d=1,2,3$,
and the cases $d\geq3$ are solved \cite{lewin.lieb.seiringer.19,cotar.petrache.2019}.
For dimensions $d\geq 3$ the argument is the same, only one should replace every occurrence of $-\log$ with $|\cdot|^{2-d}$.

The first model is what we will call Jellium.
By scaling we may assume that the density of the background is $\rho = 1$.
Then the Jellium energy of $N$ particles in a domain $\Omega_N$ of size $|\Omega_N| = N$ 
is  
\[
\mathcal{E}_\textnormal{Jel}(\Omega_N, x_1, \ldots, x_N) 
	= 
		- \sum_{j<k} \log |x_j - x_k|
		+ \sum_{j=1}^N \int_{\Omega_N} \log |x_j - y| \ud y
		- \frac{1}{2} \iint_{\Omega_N \times \Omega_N} \log|x-y| \ud x\ud y
\]
The electrons are thought of as discrete classical particles in a uniform (positive) background, such that the entire system is neutral.
The electrons and background all interact through Coulomb interaction, in 2 dimensions given by $-\log|x|$. 
The long range behaviour of the logarithm means that this setting is somewhat different from the 3-dimensional case.
In \cite{sari.merlini.76} it is shown that the thermodynamical limit
\[
	e_\textnormal{Jel} = \lim_{\Omega_N \nearrow \R^2} \min_{x_1, \ldots, x_N \in \R^2} \frac{\mathcal{E}_\textnormal{Jel}(\Omega_N, x_1, \ldots, x_N)}{|\Omega_N|}
\]
exists under fairly non-restrictive conditions on the sequence of domains $\Omega_N$. 
For instance $\Omega_N = N^{1/2}\Omega$ for a fixed convex set $\Omega$ of size $|\Omega| = 1$.

The second model is that of periodic Jellium.
Here the $n = \ell^2$ electrons live on a torus of side length $\ell$ 
in a uniform background of opposite charge. 
The Coulomb potential between the electrons is replaced by the 
periodic Coulomb potential, where the electrons interact 
with all the mirror images of the other electrons and the uniform background.
The functional is defined as follows.

First, we define the periodic Coulomb potential $G_\ell$ as follows.
$G_\ell(x) = G_1(x/\ell)$, where $G_1$ is the one-periodic Coulomb potential,
satisfying $-\Delta G_1 = 2\pi \left(\sum_{z \in \Z^2} \delta_z - 1\right)$ and $\int_{C_1} G_1 \ud x= 0$, 
where $C_1 = (-1/2, 1/2)^2$.
It corresponds to the potential generated by a point charge and all its images
together with a uniform oppositely charged background. 
The background must be included for this not to diverge. 
Then
\[
	G_\ell(x) = G_1(x/\ell) = \frac{2\pi}{\ell^2} \sum_{\substack{k \in \frac{2\pi}{\ell}\Z^2 \\ k \ne 0}} \frac{1}{k^2} e^{ikx}.
\]
Now, $G_1(x) + \log |x|$ has a limit as $x \to 0$, which we call $C_\text{mad}$.
It is the Madelung constant, i.e. twice the energy per particle of 
the configuration with 1 particle in the unit cell - i.e. a square lattice configuration.
The functional $\mathcal{E}_{\text{per}, \ell}$ may now be defined as
\[
	\mathcal{E}_{\text{per}, \ell}(x_1, \ldots, x_n) = \sum_{j < k} G_\ell(x_j - x_k) + \frac{n}{2}\left(\log \ell + C_\text{mad}\right).
\]
The first term is what one gets if one just naively replaces the Coulomb interaction in the Jellium functional
by the periodic version $G_\ell$. Note that then the particle-background and background-background terms vanish
due to the fact that $\int_{C_\ell} G_\ell \ud x = 0$. 
We then define
\[
	e_\text{per} = \lim_{\ell \to \infty} \min_{x_1, \ldots, x_n} \frac{\mathcal{E}_{\text{per}, \ell}(x_1,\ldots,x_n)}{\ell^2}.
\]
The existence of this limit was established in \cite{petrache.serfaty.2017,sandier.serfaty.15,rougerie.sefaty.16,leble.serfaty.17,hardin.saff.simanek.su.2016}.
It will also follow from the proof of \Cref{thm.main} that indeed this limit exists.

The third model is what has been called 
the uniform electron gas (UEG) in \cite{lewin.lieb.seiringer.18,lewin.lieb.seiringer.19}.
For a complete description of this model see \cite{lewin.lieb.seiringer.18}.
Here there is no background charge, and the electrons are no longer point particles.
Instead the electrons are distributed according to a 
probability density $\P$ (meaning $\P$ is a probability measure on $\R^{2N}$)
which we require to give a constant density $\rho_\P = \mathbbm{1}_{\Omega_N}$,
where $\rho_\P$ is the sum of all the marginals.
The indirect energy of the distribution is then
\[
	\mathcal{E}_\textnormal{Ind}(\P) 
	= - \int \sum_{j < k} \log |x_j - x_k| \ud \P(x_1, \ldots, x_N)
		+ \frac{1}{2} \iint \log |x-y| \rho_\P(x) \rho_\P(y) \ud x \ud y.
\]
We are interested in keeping the density fixed, and so, for any density $\rho$ with $\int \rho \ud x = N$ we define
\[
	\mathcal{E}_{\textnormal{Ind}}(\rho) = \min_{\P: \rho_\P = \rho} \mathcal{E}_\textnormal{Ind}(\P)
\]
Since the electrons are indistinguishable, we should in principle 
restrict to symmetric $\P$'s. This however gives the same minimum.
Again, we are interested in the thermodynamic limit, and for a system of uniform density, i.e.
\[	
	e_\textnormal{UEG} = \lim_{\Omega_N \nearrow \R^2} \frac{\mathcal{E}_\textnormal{Ind}(\mathbbm{1}_{\Omega_N})}{|\Omega_N|}.
\]
The existence of this was established in \cite[Theorem 2.6]{lewin.lieb.seiringer.18}.
Their proof is done in dimensions $d \geq 3$, but works without change also in dimensions $d=1,2$.
Now, our main theorem is
\begin{thm}
\label{thm.main}
We have $e_\textnormal{Jel} = e_\textnormal{per} = e_\textnormal{UEG}$.
\end{thm}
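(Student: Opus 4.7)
The plan is to prove the chain $e_{\textnormal{Jel}} \leq e_{\textnormal{UEG}} \leq e_{\textnormal{per}} \leq e_{\textnormal{Jel}}$, which gives the three-way equality. The first inequality is the easy direction: for any probability measure $\P$ on $\R^{2N}$ with one-body density $\rho_\P = \mathbbm{1}_{\Omega_N}$, Fubini gives $\int \sum_j \int_{\Omega_N}\log|x_j-y|\ud y\,\ud\P = \iint_{\Omega_N^2}\log|x-y|\ud x\ud y$, so that $\mathcal{E}_{\textnormal{Ind}}(\P) = \int \mathcal{E}_{\textnormal{Jel}}(\Omega_N,x_1,\ldots,x_N)\,\ud\P$ identically. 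In particular $\mathcal{E}_{\textnormal{Ind}}(\mathbbm{1}_{\Omega_N}) \geq \min_{x_1,\ldots,x_N} \mathcal{E}_{\textnormal{Jel}}(\Omega_N,x_1,\ldots,x_N)$, and dividing by $|\Omega_N|$ and passing to the thermodynamic limit gives the inequality.

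For $e_{\textnormal{per}} \leq e_{\textnormal{Jel}}$ I would periodise: take a near-minimising Jellium configuration $(x_1,\ldots,x_N)$ on the square $\Omega_N = [0,\ell]^2$ with $N = \ell^2$ and use the same points as a trial configuration on the torus of side $\ell$. The key identity is the decomposition $G_\ell(x) = -\log|x| + u_\ell(x)$, where $u_\ell$ is the harmonic potential near the origin generated by the periodic images of a unit point charge together with the neutralising uniform background, with $u_\ell(0) = \log\ell + C_{\textnormal{mad}}$. A direct computation then shows that the image-and-background contributions $\sum_{i<j} u_\ell(x_i - x_j)$, together with the Madelung shift $\tfrac{N}{2}(\log\ell + C_{\textnormal{mad}})$, reproduce the particle-background term $\sum_j \int_{\Omega_N} \log|x_j - y|\ud y$ and the background-background term $-\tfrac{1}{2}\iint_{\Omega_N^2}\log|x-y|\ud x\ud y$ of the Jellium functional, up to an error of order $o(\ell^2)$ arising from the mismatch between the torus and the square at the boundary.

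The heart of the proof is $e_{\textnormal{UEG}} \leq e_{\textnormal{per}}$, for which I would build a ``floating crystal'' trial state. Given a near-minimiser $Y = (y_1,\ldots,y_n)$ of $\mathcal{E}_{\textnormal{per},\ell}$ with $n = \ell^2$, tile $\R^2$ by placing a copy of $Y$ at every lattice translate in $\ell\Z^2$, producing a configuration $\tilde Y$. Fix $\Omega_N = [0,L]^2$ with $L$ a large integer multiple of $\ell$, and let $\P$ be the law of $(\tilde Y + \tau) \cap \Omega_N$, where $\tau$ is uniform in $[0,\ell)^2$. Since $\Omega_N$ is a disjoint union of $\ell$-cells, every realisation contains exactly $N = L^2$ particles, and by translation invariance of $\tilde Y$ the interior one-body density equals $1$. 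Near $\partial \Omega_N$ the density deviates from $1$ on a layer of thickness $O(\ell)$, which I would repair by a local perturbation (or an auxiliary surface configuration) so that $\rho_\P = \mathbbm{1}_{\Omega_N}$ holds exactly. Computing $\int \mathcal{E}_{\textnormal{Jel}}\,\ud\P$ by averaging over $\tau$ and using the periodic structure identifies the bulk contribution with $|\Omega_N| \cdot \mathcal{E}_{\textnormal{per},\ell}(Y)/\ell^2$; combined with the first paragraph this gives $e_{\textnormal{UEG}} \leq \mathcal{E}_{\textnormal{per},\ell}(Y)/\ell^2 + o(1)$.

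The main obstacle is controlling this surface repair in dimension two. In $d \geq 3$ one may invoke the Graf-Schenker inequality to smooth densities and absorb boundary errors at negligible cost, but that tool is unavailable in $d = 2$; moreover the UEG thermodynamic limit in two dimensions is known only for densities that are exact characteristic functions, so the constraint $\rho_\P = \mathbbm{1}_{\Omega_N}$ cannot be relaxed. Because the logarithmic interaction is long-ranged, the surface repair a priori contributes at the same order as the bulk energy, and the construction must be engineered carefully enough that its cost is only $O(L) = o(L^2)$ after sending $L \to \infty$ first and $\ell \to \infty$ second. This is precisely the step that motivates the paper's modification of the Lewin-Lieb-Seiringer floating crystal---insisting that the density of the trial state be a genuine indicator function rather than a smoothed version.
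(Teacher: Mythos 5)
Your chain of inequalities $e_{\textnormal{Jel}} \leq e_{\textnormal{UEG}} \leq e_{\textnormal{per}} \leq e_{\textnormal{Jel}}$ is the same as the paper's, and your first (easy) step is correct. The two nontrivial steps, however, are asserted rather than proved, and in one of them the proposed route runs into a real obstruction. For $e_{\textnormal{per}} \leq e_{\textnormal{Jel}}$ you evaluate $\mathcal{E}_{\textnormal{per},\ell}$ on a Jellium near-minimiser and claim the discrepancy is $o(\ell^2)$ ``by a direct computation''. But $\mathcal{E}_{\textnormal{per},\ell}(X) - \mathcal{E}_{\textnormal{Jel}}(C_\ell,X)$ is exactly the interaction of the neutral cell (points plus background) with all of its periodic images, and for an arbitrary near-minimiser you have no a priori control of this: positive-definiteness of $G_\ell$ gives bounds of the wrong sign (lower, not upper, bounds on the periodic energy), the image sums are only conditionally convergent for the logarithm, and a nonzero cell dipole or uncontrolled boundary correlations can contribute at volume order. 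Making this work would require a screening estimate for Jellium minimisers, which is a genuinely missing ingredient. The paper avoids the issue altogether: it packs the torus with balls (Swiss cheese), places optimal Jellium configurations in each ball, smears the leftover particles, and averages over rotations so that the modified Newton theorem for $G_L$ (interaction of separated \emph{neutral radial} charges vanishes --- note only constancy of $\rho * G_L$ off the support holds, not vanishing) kills all inter-ball terms exactly; then $G_L = -\log|\cdot| + \log L + C_{\textnormal{mad}} + o(1)$ turns each ball's energy into its Jellium energy.

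For $e_{\textnormal{UEG}} \leq e_{\textnormal{per}}$ two remarks. First, your worry about the density is misplaced: with $\tau$ uniform over a full period and $L \in \ell\N$, the one-body density of the cut crystal $(\tilde Y + \tau)\cap\Omega_N$ is \emph{exactly} $\mathbbm{1}_{\Omega_N}$, so no boundary ``repair'' is needed --- and an unspecified local perturbation would risk destroying precisely the exact-characteristic-function property that the $d=2$ thermodynamic limit requires. Second, and this is the real gap, the statement that averaging over $\tau$ ``identifies the bulk contribution with $|\Omega_N|\,\mathcal{E}_{\textnormal{per},\ell}(Y)/\ell^2$'' hides the entire content of the proof: one must control conditionally convergent long-range sums, regularise the self-energy, and handle the boundary cells that are cut (which carry net charge and dipole defects in each realisation). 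The paper's trial state is engineered so that these estimates can actually be carried out: it keeps \emph{whole} cells and shifts the background by the centre of mass so each cell has zero dipole (hence $\hat f(p) = O(p^2)$ and the Fej\'er-kernel limit produces exactly $\mathcal{E}_{\textnormal{per},\ell}/n$ with no shape-dependent shift), smears the point charges via Newton's theorem, and restores the exact density $\mathbbm{1}_C$ with a fluid halo of $M$ particles around a hole $F$ larger than the floating crystal; the resulting error terms are shown to be $O(N^{1/2}\log N)$ using \Cref{prop.d(f).geq.0} (positivity of $D$ only for \emph{neutral} charges in $d=2$), \Cref{prop.n.1/2.logn} and \Cref{prop.d.convolve.dipole.=0}. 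Your construction may well be workable, but as written none of the corresponding estimates are supplied, so the key inequality is not established.
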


\noindent
The analogous result in dimensions $d\geq 3$ was proven by Cotar and Petrache \cite{cotar.petrache.2019}
using methods of optimal transport and later in dimension $d=3$ by Lewin, Lieb and Seiringer \cite{lewin.lieb.seiringer.19}
using a ``floating crystal'' trial state, which our method builds on.
Cotar and Petrache \cite[Remark 1.7]{cotar.petrache.2019} note that the case of $d=2$ 
is an open problem. Our findings here thus solves this open problem.

One inequality is the following argument.
Let $\P$ be any $N$-particle probability measure with $\rho_\P = \mathbbm{1}_{\Omega_N}$. Then,
\begin{multline*}
- \int \sum_{j < k} \log|x_j - x_k| \ud \P(x_1, \ldots, x_N) + \frac{1}{2} \iint_{\Omega_N \times \Omega_N} \log |x-y| \ud x \ud y
	\\
	 = \int \mathcal{E}_{\textnormal{Jel}}(\Omega_N, x_1, \ldots, x_N) \ud \P(x_1, \ldots, x_N)
	 \geq \min \mathcal{E}_{\textnormal{Jel}}(\Omega_N, x_1, \ldots, x_N).
\end{multline*}
Optimising over $\P$ and taking the thermodynamical limit we thus get
$e_\textnormal{EUG} \geq e_\textnormal{Jel}$.
In order to get the inequality $e_\text{UEG} \leq e_\text{per} \leq e_\text{Jel}$ 
we will superficially introduce a crystal structure to the Jellium configuration.
This is similar to (and inspired by) the floating crystal argument from \cite{lewin.lieb.seiringer.19}. 
We give the proof in sections \ref{sect.upper.bound.ueg} and \ref{sect.upper.bound.per}.

\section{Lattice Configurations}
We now consider the Jellium energies of periodic configurations,
when the electrons are positioned on a lattice.
We will consider these configurations in any dimension $d$ and
for general Riesz interactions. 
These we first define.
For $s\in \R$ the Riesz potential $V_s$ on $\R^d$ is given by
\[
	V_s(x) = \begin{cases}
	|x|^{-s} & \text{if } s > 0, \\
	-\log |x| & \text{if } s = 0, \\
	-|x|^{-s} & \text{if } s < 0.
	\end{cases}
\]
Then $V_{d-2}$ is the Coulomb potential in $d$ dimensions. 
With this we may define for $s < d$ the Jellium energy in $d$ dimensions with potential $V_s$,
\[
	\mathcal{E}_{\textnormal{Jel}, d, s}(\Omega_N, x_1,\ldots,x_N)
		= \sum_{j < k} V_s(x_j - x_k) - \sum_{j=1}^N \int_{\Omega_N} V_s(x_j - y) \ud y
			+D_{d,s}\left(\mathbbm{1}_{\Omega_N}\right),
\]
where $D_{d,s}(f,g) = \frac{1}{2} \iint_{\R^d \times \R^d} f(x)g(y) V_s(x-y) \ud x \ud y$.
Define for a lattice $\mathcal{L}\subset \R^d$ with Wigner-Seitz unit cell $Q$ with $|Q| = 1$ 
and $s$ satisfying $d-4 < s < d$ the energy
\[
	e_{\textnormal{Jel}, s}^\mathcal{L} 
		= \lim_{\Omega_N\nearrow \R^2} \frac{\mathcal{E}_{\textnormal{Jel}, d, s}(\Omega_N, x_1,\ldots,x_N)}{|\Omega_N|}
\]
as the thermodynamic limit of Jellium, when the electrons are placed on the lattice. 
(The existence of this thermodynamic limit follows from the proof of the theorem below.)
Here $\Omega_N = \bigcup_{i=1}^N \left(Q+x_i\right)$.
Define for $\textnormal{Re}(s) > d$ the Epstein (or lattice) $\zeta$-function
\[
	\zeta_{\mathcal{L}}(s) = \frac{1}{2} \sum_{x\in \mathcal{L}\setminus 0} \frac{1}{|x|^s}.
\]
This function has a meromorphic continuation to all of $\C$ with a simple pole at $s=d$, see \cite{borwein.borwein.straub.13}.
These more complicated $\zeta$-functions can oftentimes be expressed in terms of simpler functions, see \cite{zucker.robertson.75}.
We prove the following.
\begin{thm}
\label{thm.lattice.jellium}
Let $s$ satisfy $d-4 < s < d$ and 
let $\mathcal{L}\subset \R^d$ be a lattice with Wigner-Seitz unit cell $Q$, $|Q| = 1$.
Then the Jellium energy of the lattice configuration is
\[
	e_{\textnormal{Jel}, s}^\mathcal{L}
		= 
			\begin{cases}
				\zeta_\mathcal{L}(s) & \textnormal{if } s > 0,
				\\
				\zeta_\mathcal{L}'(0) & \textnormal{if } s = 0,
				\\
				 -\zeta_\mathcal{L}(s) & \textnormal{if } s < 0.
			\end{cases}
\]
\end{thm}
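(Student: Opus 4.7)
The plan is to rewrite the Jellium functional for a lattice configuration so that it exposes an Ewald representation, and then identify the result with the analytic continuation of $\zeta_\mathcal{L}(s)$. To that end I would first regroup each particle together with the uniformly charged Wigner--Seitz cell around it into a single neutral object $\eta_i := \delta_{x_i} - \mathbbm{1}_{Q + x_i}$. Since the cells $Q + x_i$ tile $\Omega_N$, one has $\sum_i \delta_{x_i} - \mathbbm{1}_{\Omega_N} = \sum_i \eta_i$, and expanding the three terms of $\mathcal{E}_{\textnormal{Jel}, d, s}$ in this decomposition yields
\[
\mathcal{E}_{\textnormal{Jel}, d, s}(\Omega_N, x_1, \ldots, x_N) = N J_s + \tfrac{1}{2}\sum_{j \ne k} W_s(x_j - x_k),
\]
where $W_s(z) = V_s(z) - 2 (V_s * \mathbbm{1}_Q)(z) + (V_s * \mathbbm{1}_Q * \mathbbm{1}_Q)(z)$ is the interaction of two neutral atoms at separation $z \ne 0$ and $J_s = -(V_s * \mathbbm{1}_Q)(0) + \tfrac{1}{2}(V_s * \mathbbm{1}_Q * \mathbbm{1}_Q)(0)$ is the regularized self-energy of a single one. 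Both are finite when $s < d$, since convolution with $\mathbbm{1}_Q$ smooths out the singularity of $V_s$ at the origin.

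Next, because $\eta_0 = \delta_0 - \mathbbm{1}_Q$ has vanishing total charge and, by the central symmetry $Q = -Q$ of any Wigner--Seitz cell, also vanishing first moment, the multipole expansion of $W_s = V_s * \eta_0 * \eta_0$ at large $|z|$ gives $W_s(z) = O(|z|^{-s-4})$. This is exactly what is needed for $\sum_{z \in \mathcal{L} \setminus \{0\}} |W_s(z)|$ to converge when $d - 4 < s < d$, and together with a standard counting of translates inside $\Omega_N = \bigcup_i(Q + x_i)$ it yields both existence of the thermodynamic limit and the explicit formula
\[
e_{\textnormal{Jel}, s}^{\mathcal{L}} = J_s + \tfrac{1}{2} \sum_{z \in \mathcal{L} \setminus \{0\}} W_s(z).
\]

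To identify this quantity with $\zeta_\mathcal{L}(s)$, I would use an Ewald--Mellin argument: substitute $|x|^{-s} = \Gamma(s/2)^{-1}\int_0^\infty t^{s/2-1} e^{-|x|^2 t}\,\ud t$ into $W_s$ and $J_s$, split the $t$-integral at $t = 1$, and apply Poisson summation to $\sum_{z \in \mathcal{L}} e^{-|z|^2 t}$ on the piece $t \in (0,1)$ to convert it into a rapidly convergent sum over the dual lattice $\mathcal{L}^*$. The background corrections built into $J_s$ and into the convolutions inside $W_s$ turn out to be exactly those that cancel the simple pole at $s = d$ of the naive sum $\tfrac12 \sum_{z \ne 0} |z|^{-s}$, leaving precisely the standard Ewald representation of the meromorphic continuation of $\zeta_\mathcal{L}(s)$. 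The three cases of the theorem then follow: the value at $s > 0$ can be read off directly, the case $s = 0$ by differentiating at $s = 0$ (since $V_0 = -\log|\cdot| = \dd{s}\big|_{s=0}|\cdot|^{-s}$), and the case $s < 0$ by accounting for the extra sign in the definition of $V_s$.

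The main obstacle lies in this last identification step. It requires justifying the various exchanges of integration, lattice sum, and dual-lattice sum in the Ewald splitting (with uniform control on the theta sum $\sum_z e^{-|z|^2 t}$ and its Poisson dual, uniformly in $s$ in a neighbourhood of the strip $d-4 < s < d$), and then matching the closed-form result to $\zeta_\mathcal{L}(s)$ as defined by meromorphic continuation. The borderline case $s = 0$ is particularly delicate, since obtaining $\zeta'_\mathcal{L}(0)$ requires analyticity of each term in $s$ in order to differentiate inside the Ewald formula.
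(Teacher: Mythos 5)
Your first half is essentially the paper's own argument: regrouping each electron with its Wigner--Seitz cell into the neutral charge $\delta_{x_i}-\mathbbm{1}_{Q+x_i}$, so that $\mathcal{E}_{\textnormal{Jel},d,s}=NJ_s+\tfrac12\sum_{j\ne k}W_s(x_j-x_k)$ with $W_s=V_s*(\delta-\mathbbm{1}_Q)*(\delta-\mathbbm{1}_Q)$, the $O(|z|^{-\tRe(s)-4})$ decay (your explicit use of $Q=-Q$ to kill the dipole is a point the paper hides inside its ``tedious but straightforward Taylor expansion''), and the limit $e^{\mathcal{L}}_{\textnormal{Jel},s}=J_s+\tfrac12\sum_{z\in\mathcal{L}\setminus0}W_s(z)$. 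Where you genuinely differ is the identification with $\zeta_\mathcal{L}(s)$: the paper stays entirely in real space, writing for $\tRe(s)>d$ the identity $\zeta_\mathcal{L}(s)=\tfrac12\sum_{z\ne0}\tilde W_s(z)+\int_{\R^d\setminus Q}V_s(y)\ud y-\tfrac12\int_{\R^d\setminus Q}\int_Q V_s(y-z)\ud y\ud z$ and then analytically continuing the two boundary integrals across $s=d$ by elementary cutoff decompositions, which in the strip $d-4<\tRe(s)<d$ turn into $-\int_Q|y|^{-s}\ud y+\tfrac12\iint_{Q\times Q}|y-z|^{-s}\ud y\ud z$, i.e.\ precisely your $J_s$; your Ewald--Mellin route would instead reach the continuation through Poisson summation, which is a classical and viable alternative (it buys the standard incomplete-gamma representation and makes the pole structure transparent), whereas the paper's route needs no Fourier analysis at all.

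That said, the step you yourself flag as ``the main obstacle'' is where the actual content of the theorem sits, and your sketch asserts rather than establishes the decisive cancellation. To make it close you need: (i) the fact that $\widehat{\mathbbm{1}_Q}(k)=\int_Q e^{-ik\cdot x}\ud x$ equals $1$ at $k=0$ and vanishes for every $k\in\mathcal{L}^*\setminus\{0\}$ (because $Q$ is a fundamental domain), so that Poisson summation applied to the full convolved object $G_t*(\delta-\mathbbm{1}_Q)*(\delta-\mathbbm{1}_Q)$ --- not to $\sum_z e^{-|z|^2t}$ alone --- reduces the dual sum to $\sum_{k\in\mathcal{L}^*\setminus0}\widehat{G_t}(k)$ and thereby removes exactly the $k=0$ mode responsible for the pole at $s=d$; (ii) careful bookkeeping of the divergent self-term $V_s(0)$ implicit in combining $J_s$ with the $z=0$ contribution; and (iii) a workaround for $s\le 0$, where the Mellin formula $|x|^{-s}=\Gamma(s/2)^{-1}\int_0^\infty t^{s/2-1}e^{-|x|^2t}\ud t$ is not available, so the computation should be run for $0<\tRe(s)<d$ (for $\tilde W_s$ with complex $s$) and the cases $s<0$ and $s=0$ recovered by analyticity in the strip --- exactly the $W_s$ versus $\tilde W_s$ bookkeeping the paper performs. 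With those points supplied your proof is complete; as written, the identification step is a plausible plan rather than a proof.
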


\noindent
Many similar results exist in the literature. In \cite{borwein.borwein.straub.13,borwein.borwein.shail.1989}
a similar result is shown for a slightly different energy functional in the case $d-2 < s < d$
via analytic continuation of the Epstein $\zeta$-function.
Extending upon these ideas a partial result for the Jellium energy is shown in \cite[Appendix B]{lewin.lieb.2015}.
Other formulations also exist, 
see for instance \cite{kuijlaars.saff.1998,brauchart.hardin.saff.2012,betermin.sandier.18}
for the case of logarithmically interacting points on the unit 2-sphere,
and \cite{sandier.serfaty.12,betermin.sandier.18}
for the case of logarithmically interacting points on the plane.
As we have not found a complete proof of \Cref{thm.lattice.jellium} in the literature, 
we give a straightforward proof in \cref{sect.lattice}.
As an application of the theorem, we compute the energy density of Jellium in the triangular lattice (in 2 dimensions). 
\begin{ex}
The triangular lattice is given by $\mathcal{L} = c(1,0)\Z \oplus c\left(\frac{1}{2}, \frac{\sqrt{3}}{2}\right) \Z$, where the constant $c$ is such that 
$|Q|=1$, i.e. $c^2 = \frac{2}{\sqrt{3}}$.
Thus by \cite{zucker.robertson.75} we have
\[
	\zeta_\mathcal{L}(s) = \frac{1}{2}\sum_{x\in \mathcal{L}\setminus0} \frac{1}{|x|^s}
		= \frac{1}{2c^s} \sum_{(n,m) \in \Z^2\setminus (0,0)} \frac{1}{(n^2 + mn + m^2)^{s/2}}
		= 3 c^{-s} \zeta\left(\frac{s}{2}\right)L_{3}\left(\frac{s}{2}\right),
\]
where $\zeta$ is the Riemann zeta-function and 
$L_{3}(s) = L(s,\chi)$ is the Dirichlet $L$-series for the nontrivial character mod 3, i.e.
\[
	L_{3}(s) = \sum_{n=1}^\infty \frac{\chi(n)}{n^s} = 1 - 2^{-s} + 4^{-s} - 5^{-s} + \ldots 
	= 3^{-s}\left(\zeta\left(s, 1/3\right) - \zeta\left(s, 2/3\right)\right)
\]
where $\zeta(s, a)$ is the Hurwitz $\zeta$-function. The values of these functions and their derivatives can be found in \cite{dlmf}.
We conclude that 
$e_{\textnormal{Jel}, s=0}^\mathcal{L} = \zeta_\mathcal{L}'(0) =  \frac{1}{8} \log\left(\frac{48 \pi}{\Gamma\left(1/6\right)^6}\right) \simeq -0.66056$. 
In comparison, the best-known lower bound \cite{sari.merlini.76,lieb.narnhofer.75} is 
$e_{\textnormal{Jel}, s=0} \geq -\left(\frac{3}{8} + \frac{1}{4} \log \pi\right) \simeq -0.66118$.
The triangular lattice, which is what we expect to be the ground state, is remarkably close to this lower bound. 
\end{ex}	
\begin{ex}
In dimension $d=1$, there is only one lattice, namely $\Z$. Thus
$e_{\text{Jel}, s=-1}^{\Z} = -\zeta(-1) = \frac{1}{12}$.
It is in fact known that Jellium is crystalised in one dimension \cite{kunz.74,baxter.63,choquard.75}.
\end{ex}

\section{Relation to the Renormalized Energy}\label{sect.renorm.energy}
We now relate the Jellium energy to the renormalized energy defined in \cite{sandier.serfaty.12}.
This renormalized energy has appeared before in the study of Ginzburg-Landau theory 
and of Coulomb gases, see 
\cite{sandier.serfaty.12,borodin.serfaty.13,sandier.serfaty.15,petrache.serfaty.2017,rougerie.sefaty.16,leble.serfaty.17,cotar.petrache.2019}.
We recall the definition here.
\begin{defn}
Let $j$ be a vector-field on $\R^2$. Let $m > 0$. We say that $j\in \mathcal{A}_m$ if
\begin{equation}
\label{eqn.cond.j}
	\curl j = 2\pi (\nu - m),
	\qquad 
	\div j = 0,
	\qquad
	\sup_{R > 1} \frac{\nu(B_R)}{|B_R|} < \infty,
\end{equation}
where $\nu = \sum_{p \in \Lambda} \delta_p$ for a discrete set $\Lambda \subset \R^2$.
\end{defn}	

\noindent
Then, for any function $\chi$ we define 
\[
	W(j, \chi) = \lim_{\eta \to 0}\left(\frac{1}{2}\int_{\R^2 \setminus \bigcup_{p\in \Lambda}B(p,\eta)} \chi |j|^2	+ \pi \log \eta \sum_{p\in \Lambda} \chi(p)\right).
\]
The renormalized energy is then defined as
\begin{defn}
The renormalized energy of $j$ is
\[
	W(j) := \limsup_{R\to \infty} \frac{W(j, \chi_R)}{|B_R|},
\]
where $\chi_R$ denotes any cutoff-functions satisfying
\[
	|\nabla \chi_R| \leq C, 
	\qquad
	\supp(\chi_R) \subset B_R,
	\qquad
	\chi_R(x) = 1 \textnormal{ on } B_{R-1}.
\]
\end{defn}

\noindent
We recall a few properties of $W$ from \cite{sandier.serfaty.12}.
\begin{itemize}
\item
The renormalized energy $W(j)$ 
does not depend on the choice of cut-off functions $\chi_R$. 
\item If $j \in \mathcal{A}_m$ then $j' = \frac{1}{\sqrt{m}} j(\cdot / \sqrt{m}) \in \mathcal{A}_1$ 
and $W(j) = m\left(W(j') - \frac{1}{4} \log m\right)$.
In particular 
$\min_{\mathcal{A}_m} W = m\left(\min_{\mathcal{A}_1} W - \frac{1}{4}\log m\right)$.
\item
If $j\in \mathcal{A}_m$ with $W(j) < \infty$ then 
$\lim_{R \to \infty} \frac{\nu(B_R)}{|B_R|} = m$.
\item
$\min_{\mathcal{A}_1} W$ is the limit of a sequence of periodic configurations with period $n\to \infty$.
\end{itemize}

\noindent
For periodic $\Lambda$ we have the following result.
\begin{prop}[{\cite[Proposition 3.1]{sandier.serfaty.12}}]\label{prop.w.periodic}
Suppose $\Lambda$ is periodic with respect to some lattice $\mathcal{L}$
and denote the points of $\Lambda$ in the torus $\mathbb{T} = \R^2 / \mathcal{L}$ by $\{x_1, \ldots, x_n\}$.
Define $H_{\{x_i\}}$ and $j_{\{x_i\}}$ on $\mathbb{T}$ by 
\[
	-\Delta H_{\{x_i\}} = 2\pi \left(\sum_{i=1}^n \delta_{x_i} - \frac{n}{|\mathbb{T}|}\right),
	\qquad
	j_{\{x_i\}} := - \nabla^\perp H_{\{x_i\}}.
\] 
Then $W(j) \geq W(j_{\{x_i\}})$ for any $j$ satisfying \Cref{eqn.cond.j}.
\end{prop}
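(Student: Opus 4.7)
The plan is to exploit the fact that any $j$ satisfying \Cref{eqn.cond.j} (with the same singular set $\Lambda$ as $j_{\{x_i\}}$) differs from $j_{\{x_i\}}$ by a harmonic vector field, and then to use the periodicity of $j_{\{x_i\}}$ to show that this difference contributes only a nonnegative term to the renormalized energy. I set $k := j - j_{\{x_i\}}$. Since both fields have the same distributional curl $2\pi(\nu - n/|\mathbb{T}|)$ with $\nu = \sum_{p\in\Lambda}\delta_p$ and the same (zero) divergence, $k$ satisfies $\curl k = 0$ and $\div k = 0$ distributionally on all of $\R^2$. Elliptic regularity then gives that $k$ is smooth, with each component harmonic in the classical sense.

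Next, I argue $k$ is constant. We may assume $W(j) < \infty$, else the inequality is trivial. Using the peeling of the renormalized energy, the finiteness of $W(j)$ and $W(j_{\{x_i\}})$ together with the density bound on $\nu$ give $\int_{B_R \setminus \bigcup_p B(p, \eta_0)} |k|^2 = O(R^2)$ for any fixed $\eta_0 > 0$; smoothness of $k$ near $\Lambda$ (the singular parts of $j$ and $j_{\{x_i\}}$ cancel in the difference) extends this to $\int_{B_R}|k|^2 = O(R^2)$. The mean value property applied componentwise then gives $\sup_{\R^2}|k| < \infty$, and Liouville's theorem forces $k$ to be a constant vector. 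Expanding $|j|^2 = |j_{\{x_i\}}|^2 + 2 k\cdot j_{\{x_i\}} + |k|^2$ in the definition of $W(j, \chi_R)$ and using that the counter-term $\pi \log\eta \sum_p \chi_R(p)$ is identical for $j$ and $j_{\{x_i\}}$, I obtain
\[
W(j, \chi_R) = W(j_{\{x_i\}}, \chi_R) + k \cdot \int_{\R^2} \chi_R j_{\{x_i\}} \ud x + \frac{1}{2}\int_{\R^2} \chi_R |k|^2 \ud x.
\]
Now $j_{\{x_i\}} = -\nabla^\perp H_{\{x_i\}}$ is $\mathcal{L}$-periodic with $\int_{\mathbb{T}} j_{\{x_i\}} = 0$ (since $\int_{\mathbb{T}} \partial_i H_{\{x_i\}} = 0$ by periodicity, even with the $L^1_{\textnormal{loc}}$ logarithmic singularities). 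Summing over the fundamental cells inside $B_R$ and controlling the boundary annulus $B_R \setminus B_{R-1}$ of area $O(R)$, one obtains $\int \chi_R j_{\{x_i\}} = O(R) = o(|B_R|)$. Dividing through by $|B_R|$ and taking $\limsup_{R \to \infty}$, the cross term vanishes while the $|k|^2$ contribution remains nonnegative, yielding $W(j) \geq W(j_{\{x_i\}})$, with equality iff $k = 0$.

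The hardest step is the Liouville argument: the $L^2$ norms of $j$ and $j_{\{x_i\}}$ individually blow up due to the $(x-p)^\perp/|x-p|^2$ singularities at each $p \in \Lambda$, so the $O(R^2)$ bound on $\int_{B_R}|k|^2$ cannot be read off directly from the finiteness of $W(j)$ and $W(j_{\{x_i\}})$; one has to extract it carefully from the renormalized quantities on a perforated domain and then bridge back to all of $B_R$ using smoothness of $k$ together with the density bound on $\nu$ before invoking Liouville.
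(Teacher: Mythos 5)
The paper itself does not prove this proposition --- it is imported verbatim from Sandier--Serfaty \cite[Proposition 3.1]{sandier.serfaty.12} --- so your attempt can only be measured against the cited original. Your route (set $k=j-j_{\{x_i\}}$, observe it is curl- and divergence-free hence harmonic, show it is constant by an $L^2$-growth bound plus Liouville, then expand $W(j,\chi_R)$ and kill the cross term using that $j_{\{x_i\}}=-\nabla^\perp H_{\{x_i\}}$ has zero mean over a fundamental cell) is the natural one and follows essentially the same circle of ideas as the cited proof. The expansion identity, the observation that the counter-terms $\pi\log\eta\sum_p\chi_R(p)$ coincide because the singular set is the same, the $O(R)$ bound on $\int\chi_R\, j_{\{x_i\}}$, and the mean-value/Liouville step (once the quadratic growth bound is available) are all correct. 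One small point you gloss over: \cref{eqn.cond.j} allows an arbitrary $m>0$, and you need $m=n/|\mathbb{T}|$ for the curls of $j$ and $j_{\{x_i\}}$ to cancel; for $W(j)<\infty$ this is forced by the stated property that the density of $\nu$ converges to $m$, and for $W(j)=\infty$ the inequality is trivial, so this is harmless but should be said.

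The one substantive gap is the step you yourself flag as hardest: the claim that $W(j)<\infty$ and $W(j_{\{x_i\}})<\infty$ yield $\int_{B_R\setminus\bigcup_p B(p,\eta_0)}|k|^2=O(R^2)$. ``Peeling'' alone is circular here; what is needed is a lower bound, uniform in $p$ and $R$, on the renormalized local energy near each point, of the form
\[
\lim_{\eta\to 0}\Bigl(\tfrac12\int_{B(p,\eta_0)\setminus B(p,\eta)}\chi_R|j|^2+\pi\chi_R(p)\log\eta\Bigr)\ \geq\ -C ,
\]
for a fixed $\eta_0$ smaller than half the minimal separation of the periodic set $\Lambda$; summing over the $O(R^2)$ points then converts $W(j,\chi_R)\leq CR^2$ into $\int_{B_{R-1}\setminus\bigcup_p B(p,\eta_0)}|j|^2\leq C R^2$, after which your comparison with the (periodic, hence manifestly $O(R^2)$) energy of $j_{\{x_i\}}$, the bridging across the small balls by subharmonicity of $|k|^2$, and Liouville all go through. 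This local lower bound is true but requires an argument: near each $p$ write $j=-\nabla^\perp\log|x-p|+g_p$, where $g_p$ solves a div--curl system with constant right-hand side in $B(p,\eta_0)$ and is therefore smooth; the cross term is bounded by $C\eta_0\|g_p\|_{L^\infty(B(p,\eta_0/2))}\lesssim\eta_0\bigl(\|g_p\|_{L^2(B(p,\eta_0))}+1\bigr)$ by interior elliptic estimates and can then be absorbed into the positive term $\tfrac12\|g_p\|_{L^2}^2$, with a minor extra care (using $|\nabla\chi_R|\leq C$, so $\chi_R\approx\chi_R(p)$ on $B(p,\eta_0)$ up to an integrable correction) for points in the annulus where $\chi_R$ is neither $0$ nor $1$. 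With that lemma supplied, your proof is complete; as written, it asserts precisely the estimate that carries the technical weight of the proposition.
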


\noindent
Note that $H_{\{x_i\}}$ is defined uniquely up to a constant, and thus $j_{\{x_i\}}$ is well-defined.
Now, the relation of this renormalized energy to the Jellium energy is the following.
\begin{cor}
\label{cor.renorm.energy}
The renormalized energy is given by
\[
	\min_{\mathcal{A}_1} W = 2 \pi e_\textnormal{Jel}.
\]
\end{cor}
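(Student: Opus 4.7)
The plan is to go through $e_\text{per}$ using Theorem 2.1, which gives $e_\text{Jel}=e_\text{per}$, so the statement reduces to proving $\min_{\mathcal{A}_1}W = 2\pi e_\text{per}$. The heart of the argument is the explicit identity
\[
    W(j_{\{x_i\}}) \;=\; \frac{2\pi}{\ell^2}\,\mathcal{E}_{\text{per},\ell}(x_1,\ldots,x_n)
\]
for a periodic configuration of $n=\ell^2$ points on the torus $\mathbb{T}$ of side $\ell$ (so that $n/|\mathbb{T}|=1$ and $j_{\{x_i\}}\in\mathcal{A}_1$).

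To derive the identity I would use $H_{\{x_i\}}=\sum_i G_\ell(\cdot-x_i)$, which is consistent with $\int_\mathbb{T}H_{\{x_i\}}=0$ since $\int_\mathbb{T}G_\ell=0$, together with $|j_{\{x_i\}}|^2=|\nabla H_{\{x_i\}}|^2$. The expansion $G_\ell(x)=-\log|x|+C_\text{mad}+\log\ell+o(1)$ at $0$ yields, near each $x_i$,
\[
    H_{\{x_i\}}(x) \;=\; -\log|x-x_i| + C_\text{mad} + \log\ell + \sum_{j\neq i}G_\ell(x_j-x_i) + o(1).
\]
Integration by parts on $\mathbb{T}\setminus\bigcup_i B(x_i,\eta)$, using $\Delta H_{\{x_i\}}=2\pi$ on this set and $\int_\mathbb{T}H_{\{x_i\}}=0$, reduces the Dirichlet integral to boundary contributions $\tfrac12\int_{\partial B(x_i,\eta)}H_{\{x_i\}}\partial_\nu H_{\{x_i\}}\,dS$. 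Each of these contributes $\pi\bigl(-\log\eta + C_\text{mad}+\log\ell+\sum_{j\neq i}G_\ell(x_j-x_i)\bigr)+o(1)$, so the aggregate $-\pi n\log\eta$ cancels exactly against the counterterm $\pi n\log\eta$ in the definition of $W(j,\chi)$. Passing to $\eta\to0$ gives $W(j_{\{x_i\}},\chi_\mathbb{T}) = \pi n(C_\text{mad}+\log\ell) + 2\pi\sum_{i<j}G_\ell(x_i-x_j) = 2\pi\,\mathcal{E}_{\text{per},\ell}$. Since for a periodic vector field the ratio $W(j,\chi_R)/|B_R|$ stabilises as $R\to\infty$ to the per-cell average $W(j,\chi_\mathbb{T})/|\mathbb{T}|$ (the large ball $B_R$ just counts many periods), the displayed identity follows.

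By Proposition 3.1, for each $\ell$ the minimum of $W$ over $j\in\mathcal{A}_1$ whose point set is $\ell\Z^2$-periodic with $n=\ell^2$ points per cell equals $\tfrac{2\pi}{\ell^2}\min_{x_1,\ldots,x_n}\mathcal{E}_{\text{per},\ell}$, and this converges to $2\pi e_\text{per}$ as $\ell\to\infty$ by definition of $e_\text{per}$. Combining with the fourth bulleted property of $W$ from \cite{sandier.serfaty.12} (that $\min_{\mathcal{A}_1}W$ is attained as the limit of minima of periodic configurations with period tending to infinity), we conclude
\[
    \min_{\mathcal{A}_1} W \;=\; \lim_{\ell\to\infty}\frac{2\pi}{\ell^2}\min_{x_1,\ldots,x_n}\mathcal{E}_{\text{per},\ell} \;=\; 2\pi e_\text{per} \;=\; 2\pi e_\text{Jel}.
\]

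The hard part is the bookkeeping in the identity for $W(j_{\{x_i\}})$: the cancellation of the $-\log\eta$ boundary contributions against the counterterm must be carried out precisely to yield an exact equality rather than an asymptotic, and the reduction of the $\limsup$ over $B_R$ in the definition of $W$ to the per-cell value for periodic fields must be justified so that dividing $W(j_{\{x_i\}},\chi_\mathbb{T})$ by $|\mathbb{T}|$ indeed gives $W(j_{\{x_i\}})$. Both steps are in the spirit of the Sandier--Serfaty framework, but care is needed because the desired statement is an identity, not merely a bound.
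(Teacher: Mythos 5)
Your proof is correct and follows essentially the same route as the paper: reduce to $e_\text{per}$ via Theorem \ref{thm.main}, use the periodic-approximation property of $\min_{\mathcal{A}_1}W$ together with Proposition \ref{prop.w.periodic}, and identify $W(j_{\{x_i\}})=\tfrac{2\pi}{n}\mathcal{E}_{\text{per},\ell}$. The only difference is that you sketch a direct integration-by-parts derivation of this identity (and of the per-cell reduction for periodic fields), where the paper simply cites Lemma 2.7 of Borodin--Serfaty; your sketch of that computation is sound.
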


\noindent
Cotar and Petrache \cite{cotar.petrache.2019} 
have shown a similar result for more general Riesz interactions, but not including the $d=2, s=0$ case, which is the one considered here,
see \cref{remark.renorm.s.d}.
\begin{proof}
Since $\min_{\mathcal{A}_1} W$ is the limit of periodic configurations \cite[Theorem 1]{sandier.serfaty.12} 
and any such periodic configuration clearly has energy at least  $\min_{\mathcal{A}_1} W$ we have
\[
	\min_{\mathcal{A}_1} W = \lim_{\ell \to \infty} \min_{\substack{j \in \mathcal{A}_1 \\ j \textnormal{ is } \ell\textnormal{-periodic}}} W(j)
\]
Now, suppose $j$ is $\ell$-periodic.
Denote the points of $\Lambda$ in the torus $\R^2/\ell \Z^2$ by $\{x_1, \ldots, x_n\}$.
Then $W(j) \geq W \left(j_{\{x_i\}}\right)$ by \Cref{prop.w.periodic}.
Now, by \cite[Lemma 2.7]{borodin.serfaty.13}
\[
	W \left(j_{\{x_i\}}\right) = \frac{2\pi}{n} \sum_{i < j} G_\ell(x_i - x_j) + \pi \lim_{x \to 0}\left( G_\ell(x) + \log |x|\right)
		= \frac{2\pi}{n} \mathcal{E}_{\text{per}, \ell}(x_1,\ldots,x_n).
\]
Hence by \Cref{thm.main}
\[
	\min_{\mathcal{A}_1} W = 2\pi \lim_{n\to \infty} \min_{x_1,\ldots,x_n} \frac{\mathcal{E}_{\text{per}, \ell}(x_1,\ldots,x_n)}{n}
		= 2\pi e_\text{per} = 2 \pi e_\text{Jel}.
	\qedhere
\]
\end{proof}

\begin{remark}
With this, the known bounds on $e_\text{Jel}$ carry over. The upper bound of 
\[
	\min_{\mathcal{A}_1} W = 2\pi e_\text{Jel} \leq 2\pi e_\text{Jel}^{\mathcal{L}} \simeq -4.1504,
\]
where $\mathcal{L}$ is the triangular lattice
was previously known \cite{sandier.serfaty.12,betermin.sandier.18}.
The lower bound $e_\text{Jel} \geq - \left(\frac{3}{8} + \frac{1}{4}\log\pi\right)$ \cite{sari.merlini.76,lieb.narnhofer.75} 
gives the bound
\[
	\min_{\mathcal{A}_1} W = 2 \pi e_\text{Jel} \geq -\pi \left(\frac{3}{4} + \frac{1}{2} \log \pi\right) \simeq -4.1543.
\]
This is an improvement on previous known lower bounds.
The previous known lower bound by Steinerberger \cite{steinerberger.20}, 
translated to this setting using the results of \cite{betermin.sandier.18}, 
is 
$\min_{\mathcal{A}_1} W \geq -\frac{\pi}{2}(1 + \gamma + \log \pi)\simeq -4.2756$,
where $\gamma \simeq 0.577$ is the Euler--Mascheroni constant.
\end{remark}

\begin{remark}\label{remark.log.s2}
The Renormalized energy has also been used by Bétermin and Sandier \cite{betermin.sandier.18} in the problem of 
optimal point-configurations on the sphere $\mathbb{S}^2$ with a logarithmic energy functional,
i.e. points $x_1,\ldots,x_n \in \mathbb{S}^2$ minimising 
$E(x_1,\ldots,x_n) = - \sum_{i\ne j} \log |x_i - x_j|$.
The problem of minimising the logarithmic energy of points on the sphere 
has received much study, see \cite{betermin.sandier.18,brauchart.hardin.saff.2012,kuijlaars.saff.1998,dubickas.1996,steinerberger.20},
and is linked to Smale's 7th problem, see \cite{beltran.2012} for a review.

Let $\mathcal{E}_\text{log}(n) = \min E(x_1,\ldots,x_n)$
denote the minimal energy. 
Bétermin and Sandier \cite{betermin.sandier.18}
showed that there exists a constant $c_{\text{log}} = \frac{1}{\pi}\min_{\mathcal{A}_1} W + \frac{\log 4\pi}{2}$ such that
\[
	\mathcal{E}_\text{log}(n) = \left(\frac{1}{2} - \log 2\right) n^2 - \frac{1}{2}n \log n + c_{\text{log}} n + o(n)
\]
as $n\to \infty$.
Written in terms of this constant $c_{\text{log}}$ the improved lower bound says
\[
	c_{\text{log}} 
	= 2e_\text{Jel} + \frac{\log 4\pi}{2}
		\geq \log 2 - \frac{3}{4} \simeq -0.0569.
\]
The previous known lower bound by Steinerberger \cite{steinerberger.20} is $c_\text{log} \geq \frac{\log 4 - 1 - \gamma}{2} \simeq -0.0954$.
\end{remark}

\begin{remark}\label{remark.renorm.s.d}
The Renormalized energy has also been defined for general Riesz potentials in \cite{petrache.serfaty.2017},
and Jellium and periodic Jellium in \cite{lewin.lieb.seiringer.19}.
Let $\mathcal{W}$ be as defined in \cite[Definition 1.3]{petrache.serfaty.2017} (this differs from the $W$ considered above 
by a factor of 2 in the case $d=2,s=0$) and $e_\text{per}(d,s)$ the periodic Jellium energy 
for Riesz potential with parameter $s$ and in dimension $d$ as defined in \cite{lewin.lieb.seiringer.19}.
Exactly the same proof as above shows that
$\min_{\mathcal{A}_1} \mathcal{W} = 2c_{d,s} e_\text{per}(d,s)$,
for $\max(0,d-2) \leq s < d$, where 
\[
	c_{d,s} = \begin{cases}
	2s \frac{2\pi^{d/2} \Gamma\left(\frac{s+2-d}{2}\right)}{\Gamma\left(\frac{s+2}{2}\right)}
	& \text{if } \max(0, d-2) < s < d,
	\\
	(d-2) \frac{2\pi^{d/2}}{\Gamma(d/2)}
	& \text{if } s=d-2 > 0,
	\\
	2\pi
	& \text{if } s=0, d=1,2.
	\end{cases}
\]
In \cite{lewin.lieb.seiringer.19,petrache.serfaty.2017,leble.serfaty.17,cotar.petrache.2019} and \Cref{thm.main} 
it is proved that $e_\text{Jel}(d,s) = e_\text{per}(d,s)$
for the relevant $d,s$. Thus we have
$\min_{\mathcal{A}_1} \mathcal{W} = 2c_{d,s} e_\text{Jel}(d,s)$.
This result was previously shown in \cite{cotar.petrache.2019} only not including the case $d=2, s=0$.
\end{remark}

\noindent
We now turn to the proofs of Theorems \ref{thm.main} and \ref{thm.lattice.jellium}.

\section{Upper Bound for the Uniform Electron Gas Energy}
\label{sect.upper.bound.ueg}
We first show that 
\[
	e_\text{UEG} \leq \frac{\mathcal{E}_{\text{per}, \ell}(x_1, \ldots, x_n)}{n}.
\]
The proof is very similar to the proof of the same result in dimension $d=3$ presented
in \cite{lewin.lieb.seiringer.19}. 
The main difference is the different choice of trial state $\P$.
As discussed above, we cannot just use the trial state considered in \cite{lewin.lieb.seiringer.19},
since in 2 dimensions we do not have the more relaxed formulation
of the thermodynamic limit for the UEG, which is used in \cite{lewin.lieb.seiringer.19}.
It is not enough that the density $\rho_\P$ is 1 in the bulk of 
$\Omega_N$, 0 outside and bounded close to the boundary.
Secondly, due to the long range behaviour of the logarithm some 
error bounds are slightly more complicated in 2 dimensions.

The construction of the trial state $\P$ is similar to that of \cite{lewin.lieb.seiringer.19}.
We consider a floating crystal immersed in a thin fluid layer.
We want our density $\rho_\P$ to be a characteristic function. 
In particular, we do not allow for the fluid to ever be where the crystal might also be 
(under different translations), i.e. we make a hole in the fluid which is larger than the crystal. 
Additionally, the fluid layer is not chosen to have constant density $1$,
but instead have some density profile $\beta$, which we describe below.

Consider any arrangement of $n$ points $x_1, \ldots, x_n$ in the cube $C_\ell$ of side length $\ell = n^{1/2}$ (centered at 0).
Adding a background shifted by the center of mass $\tau = \frac{1}{n}\sum_{j=1}^n x_j$ we get an arrangement with
no dipole moment: 
$\int y \left(\sum_{j=1}^n \delta_{x_j}(y) - \mathbbm{1}_{C_\ell + \tau}(y)\right) \ud y = 0$.
We copy this arrangement periodically in the larger cube
\[
	\Omega_N = \bigcup_{\substack{k\in \Z^2 \\ |k_1|, |k_2| \leq K}} \left(C_\ell + \ell k\right)
\]
of volume $N = \ell^2(2K + 1)^2$. 
The electrons are localised at the points $x_{j} = x_{j_0} + \ell k$. 
We now first show that 
\[
	e_\textnormal{UEG} 
		\leq \liminf_{N\to \infty} \frac{\mathcal{E}_\text{Jel}\left(\Omega_N + \tau, x_1, \ldots, x_N\right)}{N}.
\]

\noindent
Let $C$ be a square with 
$C \supset \Omega_N + 5 C_\ell$ and 
$|C\setminus \Omega_N| = M = O\left(N^{1/2}\right)$ an integer.
Define
$F = \Omega_N + 3 C_\ell + \tau$ (the hole in the fluid) and 
$\beta = \mathbbm{1}_C + \mathbbm{1}_F * \frac{\mathbbm{1}_{C_\ell}}{\ell^2} - \frac{1}{n}\sum_{j=1}^n \mathbbm{1}_{\Omega_N + x_j}$.
Then define
\[
	\P = \frac{1}{\ell^2}\int_{C_\ell} \bigotimes_{\substack{j=1,\ldots,n \\ k\in \Z^2 \\ |k_1|, |k_2| \leq K}}
		\delta_{x_j + \ell k + a} \otimes
		\left(\frac{\beta - \mathbbm{1}_{F+a}}{M}\right)^{\otimes M} \ud a.
\]
We choose $F$ such that the fluid and the crystal never overlap. 
The crystal, when shifted around, gives some density in the region $\Omega_N + C_\ell$. We thus want that any shift $F + a$ of $F$
doesn't overlap with this. Since $\tau$ is in general just some vector $\tau \in C_\ell$, this leads to our definition of 
$F = \Omega_N + 3C_\ell + \tau$. We need $\beta \geq \mathbbm{1}_{F+a}$ for any shift $a$, and so this leads to $\beta \geq \mathbbm{1}_C$,
with $C \supset \Omega_N + 5 C_\ell$. This is true by construction.
We compute that
\[
	\rho_\P = \beta + \frac{1}{n}\sum_{j=1}^n \mathbbm{1}_{\Omega_N + x_j} - \mathbbm{1}_F * \frac{\mathbbm{1}_{C_\ell}}{\ell^2} = \mathbbm{1}_C.
\]
Thus, $\rho_\P$ is a characteristic function as desired. 

In principle, we could have taken $\beta$ any function with
$\mathbbm{1}_C \leq \beta \leq \mathbbm{1}_C + \mathbbm{1}_F * \frac{\mathbbm{1}_{C_\ell}}{\ell^2} - \frac{1}{n}\sum_{j=1}^n \mathbbm{1}_{\Omega_N + x_j}$,
since then the density $\rho_\P$ would satisfy $\rho_\P \leq 1$ 
and it is trivial to extend the thermodynamic limit \cite[Theorem 2.6]{lewin.lieb.seiringer.18}
also to this case. 
In the computations below we will use the description of $\rho_\P$ in terms of $\beta$,
as this will make the computations slightly nicer.

To compute the energy $\mathcal{E}_\text{Ind}(\P)$ we first introduce the notation
\[
	D(\mu, \nu) = \frac{1}{2} \iint -\log|x-y| \ud \mu(x) \ud \nu(y),
	\qquad
	D(\mu) = D(\mu, \mu)
\]
for two (signed) measures $\mu, \nu$. This is the Coulomb interaction energy
between charge distributions $\mu$ and $\nu$.
Mostly we will use this in the case where the measures are given by functions.
The analogous object in dimensions $d\geq 3$ has $D(f) \geq 0$ for any function.
This is however not true in dimension $d=2$. In general it is only true for 
functions with zero mean.
\begin{prop}
\label{prop.d(f).geq.0}
Suppose $f$ has $\int f \ud x =0$. Then $D(f) \geq 0$.
\end{prop}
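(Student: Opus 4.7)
The plan is to reduce to a convenient class of test functions and then recognise $D(f)$ as (up to a constant) the Dirichlet energy of the associated Coulomb potential. I would first assume, by a density / approximation argument, that $f\in C_c^\infty(\R^2)$ with $\int f\,dx=0$; the general case (whatever class makes $D(f)$ well-defined, e.g.\ compactly supported signed measures of zero total mass) follows by mollification and monotone/dominated convergence, since the condition $\int f = 0$ is preserved and compensates the slow decay of $-\log|x|$.

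For such $f$, define the potential
\[
u(x) \;=\; \int_{\R^2} (-\log|x-y|)\,f(y)\,dy,
\]
which solves $-\Delta u = 2\pi f$ in the distributional sense (since $-\tfrac{1}{2\pi}\log|x|$ is the fundamental solution of $-\Delta$ in $\R^2$). Using the mean-zero hypothesis, I would rewrite
\[
u(x) \;=\; \int \log\frac{|x|}{|x-y|}\,f(y)\,dy,\qquad \nabla u(x) \;=\; -\int\left(\frac{x-y}{|x-y|^2} - \frac{x}{|x|^2}\right) f(y)\,dy,
\]
which, for $f$ of compact support, yields the decay $u(x)=O(|x|^{-1})$ and $|\nabla u(x)| = O(|x|^{-2})$ as $|x|\to\infty$.

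With these bounds, an integration by parts over a large ball $B_R$ is valid: the boundary term is of size $O(|x|^{-1})\cdot O(|x|^{-2})\cdot R \to 0$. This gives
\[
D(f) \;=\; \tfrac{1}{2}\int u(x)\,f(x)\,dx \;=\; \frac{1}{4\pi}\int u(x)\bigl(-\Delta u(x)\bigr)\,dx \;=\; \frac{1}{4\pi}\int|\nabla u(x)|^2\,dx \;\geq\; 0,
\]
as desired.

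The only real obstacle is the non-integrability of $-\log|x|$ at infinity, which makes $u$ and $D(f)$ delicate to define for general $f$; the zero-mean assumption is exactly what cures this, both by giving pointwise decay of $u$ and $\nabla u$ (via the subtraction trick above) and by making the formal Plancherel identity $D(f)=\pi\int |\widehat f(\xi)|^2/|\xi|^2\,d\xi$ meaningful, since $\widehat f(0)=0$ kills the distributional singularity of $\widehat{(-\log|\cdot|)}$ at the origin. Either viewpoint (Green's function plus integration by parts, or Fourier) makes the positivity manifest once the approximation step is handled.
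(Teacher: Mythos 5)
Your proof is correct, but it takes a genuinely different route from the paper. The paper works on the Fourier side: after reducing to $f\in\mathcal{S}$, it shows (via tempered distributions, writing $g=\frac{-1}{2\pi}\log*f^{\#}*f$ and using $-\Delta(-\log|\cdot|)=2\pi\delta$) that $\frac{|\hat f(p)|^2}{p^2}$ is the Fourier transform of an honest Schwartz function, the neutrality $\hat f(0)=0$ removing the singularity at $p=0$, and concludes $D(f)=\pi\int |\hat f(p)|^2/p^2\ud p\geq 0$ — exactly the ``formal Plancherel identity'' you mention only in passing. You instead stay in physical space: you identify $D(f)=\frac{1}{4\pi}\int|\nabla u|^2$ for the potential $u=(-\log|\cdot|)*f$, using the zero-mean hypothesis twice, once to subtract $\log|x|$ (resp.\ $x/|x|^2$) and obtain the decay $u=O(|x|^{-1})$, $\nabla u=O(|x|^{-2})$, and once implicitly through the resulting vanishing of the boundary term $O(R^{-1})\cdot O(R^{-2})\cdot O(R)\to 0$ in the integration by parts over $B_R$; these estimates are correct (the monopole term is killed by neutrality, and the surviving dipole term is harmless). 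Your approach is more elementary — no distributional Fourier transform of $-\log|\cdot|$ is needed — and it yields the quantitative identity $D(f)=\frac{1}{4\pi}\|\nabla u\|_{L^2}^2$, whereas the paper's Fourier computation is the form actually reused later in the text (e.g.\ in the periodization argument of Section 5). Both you and the paper treat the initial density/approximation step somewhat loosely; for the class of $f$ actually used (compactly supported, bounded or mildly singular, neutral) your mollification remark suffices, so I see no genuine gap.
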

\begin{proof}
By density we may assume that $f\in \mathcal{S}$, i.e. that $f$ is rapidly decreasing. Define $f^\#(x) = f(-x)$. 
Note that $\widehat{f^\#} = \overline{\hat{f}}$.
First we show that $\frac{\widehat{f^\#} \hat f}{p^2} = \frac{|\hat f(p)|}{p^2}$
is the Fourier transform of some function. 
Define $g:= \frac{-1}{2\pi}\log * f^\# * f$.
Then by \cite[Theorem 6.21]{analysis} we have that $g\in L^1_\text{loc}$ and $-\Delta g = f^\# * f$ in $\mathcal{D}'$.
Since $\log \in \mathcal{S}'$ we have that $g\in \mathcal{S}'$ and so
$p^2 \hat g = \widehat{f^\# * f}$ in $\mathcal{S}'$. Hence $\hat g(p) = 2\pi \frac{|\hat f(p)|^2}{p^2}$ as functions.
By the assumption $\int f\ud x = 0$ we have that the right-hand-side actually stays bounded (and smooth) as $p\to 0$. We conclude that 
$\hat g\in \mathcal{S}$ and so $g\in \mathcal{S}$ has a Fourier transform as a function.

Now, with $\ip{\cdot}{\cdot}$ denoting application of a distribution we have
\[
	\ip{-\log * f}{f} = \ip{\widehat{-\log}}{\widehat{f^\# *f}}
		= 2\pi \ip{p^2\cdot \widehat{-\log}}{\frac{|\hat f(p)|^2}{p^2}}
		=2\pi \int \frac{|\hat f(p)|^2}{p^2}\ud p \geq 0,
\]
since by \cite[Theorem 6.20]{analysis} we have $-\Delta(-\log|\cdot|) = 2\pi \delta$ in $\mathcal{D}'$.
\end{proof}
\begin{remark}
In dimension $d=1$ an analogous statement also holds. 
\end{remark}

\noindent
Now, we may calculate the energy (with $x_j$, for $j > n$ denoting the points $x_j + \ell k$ for $k\ne 0$)
\begin{eqnarray*}
\mathcal{E}_\text{Ind}(\P)
	& = &
		\sum_{1\leq j<k\leq N} -\log|x_j - x_k|
		+ \sum_{j=1}^N \frac{1}{\ell^2} \int_{C_\ell} \int -\log|x_j + a - y|\left( \beta(y) - \mathbbm{1}_{F + a}(y) \right)\ud y \ud a
	\\
	& &
		+ \left(1 - \frac{1}{M}\right) \frac{1}{\ell^2} \int_{C_\ell} D\left(\beta - \mathbbm{1}_{F+a}\right) \ud a 
		- D(\rho_\P)
	\\
	& = &
		\sum_{j<k} -\log|x_j - x_k|
		+\frac{1}{n} \sum_{j=1}^n 2D\left(\mathbbm{1}_{\Omega_N + x_j}, \beta\right) 
		- 2D\left(\sum_{j=1}^N \delta_{x_j}, \mathbbm{1}_F\right)
	\\
	& &
		+ D(\beta) 
		+ D \left(\mathbbm{1}_F\right)
		-2 D\left(\beta, \mathbbm{1}_{F} * \frac{\mathbbm{1}_{C_\ell}}{\ell^2}\right)
		-D\left(\rho_\P\right)
		-\frac{1}{M\ell^2}\int_{C_\ell} D \left(\beta - \mathbbm{1}_{F+1}\right) \ud a
	\\
	& = &
		\mathcal{E}_\text{Jel}\left(\Omega_N + \tau, x_1, \ldots, x_N\right)
		+ 2D\left(\sum_{j=1}^N \delta_{x_j}, \mathbbm{1}_{\Omega_N + \tau} - \mathbbm{1}_F\right)
		- D\left(\mathbbm{1}_{\Omega_N + \tau}\right)
	\\
	& & 
		+ 2D\left(\beta, \frac{1}{n}\sum_{j=1}^n \mathbbm{1}_{\Omega_N + x_j} - \mathbbm{1}_F * \frac{\mathbbm{1}_{C_\ell}}{\ell^2}\right)
		+ D(\beta)
		+ D\left(\mathbbm{1}_{F}\right)
		- D\left(\rho_P\right)
	\\
	& &
		- \frac{1}{M\ell^2} \int_{C_\ell} D \left(\beta - \mathbbm{1}_{F + a}\right) \ud a
\end{eqnarray*}
First, we claim that
\[
	- \frac{1}{M\ell^2} \int_{C_\ell} D \left(\beta - \mathbbm{1}_{F + a}\right) \ud a \leq o(N).
\]
\begin{remark}
In dimensions $d\geq 3$ the analogous term is $\leq 0$ since $D(f)\geq 0$ for any function $f$.
\end{remark}

\noindent
For any $a$ denote by $A = \supp \left(\beta - \mathbbm{1}_{F + a}\right) $. 
Then we have $|A| = O\left(N^{1/2}\right)$ and $\diam A = O(L)=O\left(N^{1/2}\right)$.
Thus
\begin{multline*}
D \left(\beta - \mathbbm{1}_{F + a}\right)
	= \frac{1}{2}\iint_{A\times A} -\log|x-y| \left(\beta - \mathbbm{1}_{F + a}\right)(x) \left(\beta - \mathbbm{1}_{F + a}\right)(y) \ud x \ud y
	\\
	\geq C\iint_{A\times A} -\log\diam A \ud x \ud y
	= O(N \log N).
\end{multline*}
Hence
\[
	- \frac{1}{M\ell^2} \int_{C_\ell} D \left(\beta - \mathbbm{1}_{F + a}\right) \ud a 
		\leq O(N^{1/2}\log N) 
		= o(N).
\]
We are thus left with the error term
\begin{multline*}
	2D\left(\sum_{j=1}^N \delta_{x_j}, \mathbbm{1}_{\Omega_N + \tau} - \mathbbm{1}_F\right)
		+ 2D\left(\beta, \frac{1}{n}\sum_{j=1}^n \mathbbm{1}_{\Omega_N + x_j} - \mathbbm{1}_F * \frac{\mathbbm{1}_{C_\ell}}{\ell^2}\right)
	\\
		- D\left(\mathbbm{1}_{\Omega_N + \tau}\right)
		+ D(\beta)
		+ D\left(\mathbbm{1}_{F}\right)
		- D\left(\rho_P\right)
\end{multline*}
Plugging in the value of $\rho_\P$ we may calculate this term as
\begin{multline}\label{eqn.error.bound.ueg}
	-2 D\left(\sum_{j=1}^N \delta_{x_j} - \mathbbm{1}_{\Omega_N + \tau}, f\right)
	+ D\left(\mathbbm{1}_F - \mathbbm{1}_F * \frac{\mathbbm{1}_{C_\ell}}{\ell^2}, 
		g\right)	
	+ D\left(\frac{1}{n}\sum_{j=1}^n \mathbbm{1}_{\Omega_N + x_j} - \mathbbm{1}_{\Omega_N + \tau},
		g\right),
\end{multline}
where 
\[
	f = \mathbbm{1}_F - \mathbbm{1}_{\Omega_N + \tau},
	\qquad 
	g = \mathbbm{1}_F + \mathbbm{1}_F * \frac{\mathbbm{1}_{C_\ell}}{\ell^2} - \mathbbm{1}_{\Omega_N + \tau}	- \frac{1}{n}\sum_{j=1}^n \mathbbm{1}_{\Omega_N + x_j}.
\]
We claim that \Cref{eqn.error.bound.ueg} is $O(N^{1/2}\log N)$ and thus vanishes in the desired limit.
This will follow from appropriate Taylor expansions of $-\log|\cdot|$ and the following two propositions.
\begin{prop}\label{prop.n.1/2.logn}
Let $A$ be a square of size $|A| = O(N)$. Let $\mu$ be a measure satisfying
$\mu\left(B_R\right) = O\left(R^2\right)$ as $R\to \infty$.
Let $B$ be the boundary region of $A$, meaning 
$B = \{x \in \R^2: \ud(x, \partial A) \leq \ell\}$ for some fixed $\ell > 0$. Then
\[
	\int_A \int_{B\cap \{|x-y| > 1\}} \frac{1}{|x-y|^2} \ud y \ud \mu(x) = O\left(N^{1/2}\log N\right).
\]
\end{prop}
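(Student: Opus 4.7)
The approach is to apply Fubini to swap the two integrals and then estimate the resulting $x$-integral by a dyadic-shell decomposition around $y$. By Fubini,
\[
\int_A \left(\int_{B \cap \{|x-y|>1\}} \frac{\ud y}{|x-y|^2}\right) \ud \mu(x) = \int_B J(y) \ud y, \qquad J(y) := \int_{A \cap \{|x-y|>1\}} \frac{\ud \mu(x)}{|x-y|^2},
\]
so it suffices to estimate $J(y)$ uniformly in $y \in B$ and multiply by the measure of $B$.

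For each fixed $y \in B$ I would decompose the $x$-domain into dyadic shells $A \cap \{2^k \leq |x-y| < 2^{k+1}\}$ for $k = 0, 1, \ldots, k_{\max}$. Since $A$ is a square of side $O(N^{1/2})$, only $k_{\max} = O(\log N)$ shells are non-empty. The $k$-th shell is contained in the ball of radius $2^{k+1}$ centred at $y$, whose $\mu$-measure is at most $C \cdot 2^{2k}$ by the growth hypothesis (applied to balls centred at $y$); since $|x-y|^{-2} \leq 2^{-2k}$ throughout the shell, each shell contributes $O(1)$ to $J(y)$, and summing over the $O(\log N)$ shells yields $J(y) = O(\log N)$.

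Finally, since $B$ is a strip of fixed width $2\ell$ around the boundary of a square of side $O(N^{1/2})$, its area satisfies $|B| = O(N^{1/2})$. Combining this with the uniform bound on $J(y)$ gives
\[
\int_B J(y) \ud y = O(N^{1/2}) \cdot O(\log N) = O(N^{1/2}\log N),
\]
which is the claim. The only point that really requires comment is that the growth bound $\mu(B_R) = O(R^2)$ must be applied to balls centred at arbitrary points rather than only at the origin; this translation-invariant interpretation is the natural one and is automatic for the measures $\mu$ (Lebesgue on a large square, sums of delta functions at near-lattice points, and the like) to which the proposition will be applied below, so this presents no genuine obstacle.
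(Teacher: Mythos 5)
Your proof is correct and is essentially the same argument as the paper's: both swap the order of integration, bound the inner integral $\int_{A\cap\{|x-y|>1\}}|x-y|^{-2}\ud\mu(x)$ by $O(\log N)$ using the growth hypothesis $\mu(B_R)=O(R^2)$ for balls centred at $y$, and then multiply by $|B|=O(N^{1/2})$. The only difference is that you make explicit the dyadic-shell decomposition that the paper compresses into the one-line estimate $\int_{B_L\setminus B_1}|z|^{-2}\ud\mu(z)\leq C\log L$.
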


\noindent
One should think that $\mu$ is either Lebesgue measure or a sum of appropriately distributed $\delta$-measures.
To show this, note that 
for any fixed $y\in B$ we can bound the $x$-integral by the integral over a ball of radius $L=O\left(N^{1/2}\right)$ centered at $y$
(removing the ball of radius 1).
Thus
\[
	\int_A \int_{B \cap \{|x-y| > 1\}} \frac{1}{|x-y|^2} \ud y \ud \mu(x)
		\leq \int_B \int_{B_L \setminus B_1} \frac{1}{|z|^2} \ud \mu(z) \ud y
		\leq C \int_B \log L \ud y = O\left(N^{1/2} \log N\right).
\]
\begin{prop}\label{prop.d.convolve.dipole.=0}
Let $A, B$ be as in \Cref{prop.n.1/2.logn} and $\mu$ be a probability measure supported in $B_\ell$. 
Denote by $\tau$ the first moment of $\mu$, i.e. $\tau = \int a \ud \mu(a)$. 
Let $b$ be a function supported in $B$, which is bounded uniformly in $N$.
Then
\[
	D\left(\mathbbm{1}_{A + \tau} - \mathbbm{1}_A * \mu, b\right)
	= O\left(N^{1/2}\log N\right).
\]
\end{prop}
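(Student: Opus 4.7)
The plan is to exploit the fact that, by the choice $\tau = \int a\,d\mu(a)$, the signed measure $f := \mathbbm{1}_{A+\tau} - \mathbbm{1}_A * \mu$ has zero total mass \emph{and} zero first moment, which provides decisive cancellation in its logarithmic potential at large distances. First I would unfold the convolution and apply Fubini to rewrite
\[
D(f,b) = -\frac{1}{2}\int b(y) \int_A g(x-y)\,dx\,dy, \qquad g(z) := \log|z+\tau| - \int \log|z+a|\,d\mu(a),
\]
using that $\int (\mathbbm{1}_A * \mu)(x)\,U_b(x)\,dx = \int d\mu(a) \int_A U_b(x+a)\,dx$ with $U_b(x) = -\int b(y)\log|x-y|\,dy$, together with the fact that $\mu$ is a probability measure.

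The main technical step is to establish $|g(z)| \leq C \ell^2 |z|^{-2}$ for $|z| \geq 10\ell$. Taylor expanding gives
\[
\log|z+a| = \log|z| + \frac{z \cdot a}{|z|^2} + O(\ell^2/|z|^2)
\]
uniformly for $|a| \leq \ell$. The $\log|z|$ term is common to $\log|z+\tau|$ and to $\int \log|z+a|\,d\mu(a)$ (since $\mu(B_\ell)=1$), and thus cancels in $g$; the linear ``dipole'' term cancels precisely because $\int a\,d\mu(a) = \tau$. Only the quadrupole remainder of size $O(\ell^2/|z|^2)$ survives. Near the origin, $g$ is in $L^1_{\mathrm{loc}}(\R^2)$, because $\log|z+\tau|$ and (by Fubini) $\int \log|z+a|\,d\mu(a)$ are both locally integrable.

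With these two estimates, for any $y \in B$ the inner integral splits as
\[
\left|\int_A g(x-y)\,dx\right| \leq \int_{|w| \leq 10\ell} |g(w)|\,dw + C \int_{10\ell < |w| < \diam A} \frac{dw}{|w|^2} = O(1) + O(\log N),
\]
where in the far-field piece one may equivalently invoke \Cref{prop.n.1/2.logn} with Lebesgue measure on $A$. Multiplying by $\|b\|_\infty = O(1)$ and integrating over $\supp b \subseteq B$, whose measure is $O(N^{1/2})$, then yields the asserted $O(N^{1/2}\log N)$ bound. The main obstacle I expect is the careful bookkeeping of the Taylor remainder to confirm that the $O(|z|^{-2})$ decay of $g$ is uniform in $|z| \geq 10\ell$ and in $a \in B_\ell$; after that, the rest is routine size estimation.
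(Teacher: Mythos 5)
Your proof is correct, and it is built on the same decisive cancellation as the paper's: the choice $\tau=\int a\ud\mu(a)$ kills the dipole term in the Taylor expansion of the logarithm, leaving a quadrupole-type remainder $O(\ell^2|z|^{-2})$ whose integral over $A$ is $O(\log N)$ per boundary point, which multiplied by $|B|=O(N^{1/2})$ gives the claim. The organization, however, is genuinely different and somewhat cleaner. The paper first tiles the plane by squares $y_j+Q$ of side $\ell'\geq 4\ell$ with centers on $\partial A$, splits the potential at each $x\in B$ into the contribution of $y$ near $x$ (i.e.\ $y\in y_j+2Q$, giving $O(N^{1/2})$ by counting tiles) and $y$ far from $x$, and only then changes variables and Taylor expands; this forces it to control an extra correction term coming from the mismatch of the excised regions $(y_j-\tau+2Q)^c\cap A$ versus $(y_j-a+2Q)^c\cap A$, which is handled as another ``close'' contribution of size $O(N^{1/2})$. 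You instead perform the change of variables globally at the outset, which packages the whole signed density into the single kernel $g(z)=\log|z+\tau|-\int\log|z+a|\ud\mu(a)$; the near/far splitting then happens relative to the fixed scale $10\ell$ (local integrability of $g$ near $0$, pointwise decay $C\ell^2|z|^{-2}$ beyond), uniformly in $y\in B$, so no tiling and no region-mismatch bookkeeping are needed. What your route buys is a shorter argument with a uniform-in-$y$ inner estimate $\left|\int_A g(x-y)\ud x\right|=O(\log N)$; what the paper's route buys is that it runs in exact parallel with the way \Cref{prop.n.1/2.logn} is used in the main text for point measures. The only points to make explicit in a final write-up are that $|\tau|\leq\ell$ (so the expansion is uniform for $|z|\geq 10\ell$, $|a|\leq\ell$), and that $\sup_{|a|\leq\ell}\int_{|z|\leq 10\ell}\abs{\log|z+a|}\ud z<\infty$ justifies both the Fubini step and the $O(1)$ near-field bound; both are immediate.
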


\begin{remark}
In dimension $d\geq 1$ we similarly have
\begin{align*}
	\int_A \int_{B \cap \{|x-y| > 1\}} \frac{1}{|x-y|^d} \ud y \ud \mu(x)
	& = O\left(N^{\frac{d-1}{d}}\log N\right),
\\
	D\left(\mathbbm{1}_{A + \tau} - \mathbbm{1}_A * \mu, b\right)
	& = O\left(N^{\frac{d-1}{d}}\log N\right).
\end{align*}
Thus our argument also works in higher dimensions.
\end{remark}

\noindent 
We postpone the proof of \Cref{prop.d.convolve.dipole.=0} to the appendix.
\Cref{prop.d.convolve.dipole.=0} immediately gives that the second and third terms of \Cref{eqn.error.bound.ueg} are $O(N^{1/2} \log N)$.
For the first term we use that by Taylor expansion
\begin{align*}
	-\log|z+a| 
	& = -\log|z| - \frac{z \cdot a}{|z|^2}
		+ \int_0^1 (1-t) \left[\frac{|a|^2}{|z + ta|^2} - \frac{2 ((z + ta)\cdot a)^2}{|z + ta|^4}\right] \ud t
	\\
	& = -\log|z| - \frac{z \cdot a}{|z|^2} + O\left(\frac{1}{|z|^2}\right)
\end{align*}
for $a$ bounded and $|z|$ bounded from below.
Thus for the term
\[ 
	2D\left(\sum_{j=1}^N \delta_{x_j} - \mathbbm{1}_{\Omega_N + \tau}, f\right)
	=
			\iint f(x) (-\log|x-y|) \left(\sum_{j=1}^N \delta_{x_j} - \mathbbm{1}_{\Omega_N + \tau}\right)(y) \ud y \ud x
\]
we have
\begin{align*}
	& = 
			\sum_{\substack{j=1,\ldots,n \\ k \in \Z^2 \\ |k_1|, |k_2| \leq K}} 
				\iint f(x)(-\log|x-y|) \left[\delta_{x_j + \ell k} - \frac{1}{\ell^2} \mathbbm{1}_{C_\ell + \tau + \ell k}\right](y) \ud y \ud x
		\\
	& = 
			\sum_{j, k} \frac{1}{\ell^2}
				\int_{F\setminus (\Omega_N + \tau)} \int_{C_{\ell}} -\log|x-\ell k - x_j| + \log|x-\ell k - \tau + a| \ud a \ud x
		\\
\intertext{Hence, since $|x - k\ell|$ is bounded from below on $F\setminus(\Omega_N + \tau)$ we have}
	& = 
			\int_{F\setminus (\Omega_N + \tau)} 
				\underbrace{\sum_{j, k} - \frac{(x - \ell k - \tau) \cdot (\tau - x_j)}{|x - \ell k - \tau|^2}}_{= 0 \text{ by } \sum x_j = \sum \tau} 
				\ud x 
			+ \sum_{j,k} \int_{F\setminus (\Omega_N + \tau)} O\left(\frac{1}{|x-\ell k|^2}\right) \ud x
		\\
	& \quad
			+ \sum_{j, k} \frac{1}{\ell^2} \int_{F\setminus (\Omega_N + \tau)} 
				\underbrace{\int_{C_\ell} \frac{(x - \ell k - \tau) \cdot a}{|x- \ell k - \tau|^2} \ud a}_{= 0 \text{ by } \int a \ud a = 0} 
				\ud x 
			+ \sum_{j,k} \int_{F\setminus (\Omega_N + \tau)} O\left(\frac{1}{|x-\ell k|^2}\right) \ud x
		\\
	& = 
			O(N^{1/2}\log N)
\end{align*}
by \Cref{prop.n.1/2.logn}.
This gives the bound
$\mathcal{E}_\text{Ind}(\rho_\P) \leq \mathcal{E}_\text{Ind}(\P) \leq \mathcal{E}_\text{Jel}\left(\Omega_N + \tau, x_1, \ldots, x_N\right) + o(N)$.
Hence, by taking the thermodynamic limit we get the desired.

We now show that
\[
	\lim_{N\to \infty} \frac{\mathcal{E}_\text{Jel}(\Omega_N + \tau, x_1, \ldots, x_N)}{N} = \frac{\mathcal{E}_{\text{per}, \ell}(x_1, \ldots, x_n)}{n}.
\]
This argument is more or less the same as in \cite{lewin.lieb.seiringer.19}.
There are slight differences in the case $d=2$ compared to the case $d\geq 3$, 
which is why we present the argument here.
We really only need the bound $\leq$, and this is what we now show.

Note that the inter-particle distance is bounded uniformly from below (since there are only finitely many particles in the ``unit cell'' $C_\ell$).
Hence, by replacing the point charges by smeared out charges of some small radius $\eta$ smaller than all the inter-particle distances 
Newton's theorem says that all the particle-particle energies are preserved, but the particle-background 
interaction only increases (decreases in numerical size, but this energy is negative). 
Writing $\chi_\eta = \frac{1}{\pi \eta^2} \mathbbm{1}_{B(0, \eta)}$ we thus have
\[
\begin{aligned}
& {} \mathcal{E}_\text{Jel}(\Omega_N + \tau, x_1, \ldots, x_N)
	\\
	& \quad \leq 
		- \sum_{j<k} \iint \log |(x_j + y) - (x_k + z)| \chi_\eta(y)\chi_\eta(z) \ud y \ud z
	\\
	& \qquad 
		+ \sum_{j=1}^N \int_{\Omega_N + \tau}\int \log |(x_j + z) - y| \chi_\eta(z) \ud z \ud y
		- \frac{1}{2} \iint_{(\Omega_N + \tau) \times (\Omega_N + \tau)} \log|x-y| \ud x\ud y
	\\
	& \quad = 
		\sum_{j < k} 2D(\chi_\eta(\cdot - x_j), \chi_\eta(\cdot - x_k))
		-\sum_{j = 1}^N 2D(\chi_\eta(\cdot - x_j), \mathbbm{1}_{\Omega + \tau})
		+ D(\mathbbm{1}_{\Omega_N + \tau})
	\\
	& \quad = 
		D\left(\sum_{j=1}^N \chi_\eta(\cdot - x_j) - \mathbbm{1}_{\Omega_N + \tau}\right)
		-N\left(D(\chi_1) - \frac{1}{2} \log \eta\right).
\end{aligned}
\]
We now investigate the first term more closely. 
We may write
\[
	\sum_{j=1}^N \chi_\eta(\cdot - x_j) - \mathbbm{1}_{\Omega_N + \tau} = \sum_{\substack{k \in \Z^2 \\ |k_1|, |k_2| \leq K}} f(\cdot + \ell k),
\]
where $f = \sum_{j=1}^n \chi_\eta(\cdot - x_j) - \mathbbm{1}_{C_\ell + \tau}$.
Thus
\begin{align*}
D\left(\sum_{j=1}^N \chi_\eta(\cdot - x_j) - \mathbbm{1}_{\Omega_N + \tau}\right)
	& = 
		D\left(\sum_k f(\cdot + \ell k)\right)
	= \pi \int \frac{|\hat f(p)|^2}{p^2} \abs{\sum_k e^{ipk\ell}}^2 \ud p.
\end{align*}
Since $f$ is of compact support and satisfies $\int f \ud x = 0$ and $\int xf(x) \ud x = 0$ (this is where the zero dipole moment is used)
we have that $\hat f$ is smooth and satisfies $\hat f(p) = o(p)$ as $p \to 0$, thus $\frac{|\hat f(p)|^2}{p^2}$ vanishes at zero. 
Thus, we need to consider the behaviour of $\abs{\sum_k e^{ipk\ell}}^2$ in the limit $K \to \infty$ (i.e. $N \to \infty$). 

We have
\[
	\frac{1}{(2K+1)^2}\abs{\sum_{\substack{k \in \Z^2 \\ |k_1|, |k_2| \leq K}} e^{ipk\ell}} = \prod_{\nu = 1}^2 \frac{\sin^2(\ell p_\nu (K + 1/2))}{\sin^2( \ell p_\nu / 2)}
	\rightharpoonup \left(\frac{2\pi}{\ell}\right)^2 \sum_{p\in \frac{2\pi}{\ell} \Z^2} \delta_p
\]
weakly, and so 
\begin{eqnarray*}
\frac{1}{N} D\left(\sum_k f(\cdot + \ell k)\right)
	& \overset{N\to \infty}{\longrightarrow} &
		\frac{\pi}{\ell^2} \int \frac{|\hat f(p)|^2}{p^2} \left(\frac{2\pi}{\ell}\right)^2 \sum_{p\in \frac{2\pi}{\ell} \Z^2} \delta_p \ud p 
	\\
	& = &
		\frac{\pi}{\ell^4} \sum_{p} \int_{C_\ell + \tau} \int_{C_\ell + \tau} f(x) e^{ipx} f(y) e^{-ipy} \frac{1}{p^2} \ud x \ud y
	\\
	& = &
		\frac{1}{2n} \int_{C_\ell + \tau} \int_{C_\ell + \tau} G_\ell(x-y) f(x) f(y) \ud x \ud y.
\end{eqnarray*}
Plugging in the definition of $f$ and using that $\int_{C_\ell} G_\ell \ud x = 0$ we thus have
\begin{multline*}
\frac{1}{2n} \int_{C_\ell + \tau} \int_{C_\ell + \tau} G_\ell(x-y) f(x) f(y) \ud x \ud y
	\\ = 
		\frac{1}{2} \iint G_\ell(x-y) \chi_\eta(x)\chi_\eta(y) \ud x \ud y	
		+ \frac{1}{n}\sum_{j < k} \iint G_\ell(x-y) \chi_\eta(x - x_j) \chi_\eta(y - x_k) \ud x \ud y.
\end{multline*}
Since $\chi_\eta \rightharpoonup \delta$ as $\eta \to 0$ the second term converges to $\frac{1}{n}\sum_{j < k} G_\ell(x_j - x_k)$.
This is exactly the first term in the functional $\mathcal{E}_{\text{per}, \ell}$ as desired.
We now deal with the other term in the limit $\eta \to 0$.
\begin{align*}
& \frac{1}{2} \iint G_\ell(x-y) \chi_\eta(x)\chi_\eta(y) \ud x \ud y
	\\
	& \qquad = 
		\frac{1}{2} \iint \left(- \log \eta - \log |x-y| + \log \ell + C_\text{mad} + o_{\eta \to 0}(1)\right) \chi_1(x)\chi_1(y) \ud x \ud y,
\end{align*} 
where $o_{\eta\to 0}(1)$ vanishes as $\eta\to 0$ uniformly in $x,y$ by the compact support of $\chi_1$.
Hence
\[
	\frac{1}{2} \iint G_\ell(x-y) \chi_\eta(x)\chi_\eta(y) \ud x \ud y
	= -\frac{1}{2}\log \eta + D(\chi_1) + \frac{1}{2}\log \ell + \frac{C_\text{mad}}{2} + o_{\eta \to 0}(1).
\]
Putting everything together we now conclude
\begin{align*}
& \frac{1}{N} 
	D\left(\sum_{j=1}^N \chi_\eta(\cdot - x_j) - \mathbbm{1}_{\Omega_N + \tau}\right)
	-\left(D(\chi_1) - \frac{1}{2} \log \eta\right)
	\\
	& \qquad 
	\overset{N\to \infty}{\longrightarrow}
		\frac{1}{n}\sum_{j < k} G_\ell(x_j - x_k) 
		+ \frac{1}{2}\iint G_\ell(x-y) \chi_\eta(x)\chi_\eta(y) \ud x \ud y
		- D(\chi_1) + \frac{1}{2}\log \eta
	\\
	& \qquad
	\overset{\eta \to 0}{\longrightarrow}
		\frac{1}{n}\sum_{j < k} G_\ell(x_j - x_k) + \frac{1}{2} \log \ell + \frac{1}{2}C_\text{mad}.
\end{align*}
This proves the one inequality. To conclude the other inequality 
note that the error we made in replacing the point charges with smeared out ones 
can be bounded by 
$\int - \log |y| \chi_\eta(y) \ud y = O(-\eta^2 \log \eta)$ per particle by Newton's theorem.
This vanishes as $\eta\to 0$ and thus we conclude the equality
\[
	\lim_{N\to \infty} \frac{\mathcal{E}_\text{Jel}(\Omega_N, x_1, \ldots, x_N)}{N} = \frac{\mathcal{E}_{\text{per}, \ell}(x_1, \ldots, x_n)}{n}.
\]
This proves that $e_\text{UEG} \leq \frac{\mathcal{E}_{\text{per}, \ell}(x_1, \ldots, x_n)}{n}$.

\section{Upper Bound for the Periodic Energy}
\label{sect.upper.bound.per}
We now show that 
\[
	\limsup_{N\to \infty} \min_{x_1, \ldots, x_N \in C_L} \frac{\mathcal{E}_{\text{per}, L}(x_1, \ldots, x_N)}{N}
		\leq e_\text{Jel}.
\]
This finishes the proof of \Cref{thm.main}.
This argument is again more or less the same as in \cite{lewin.lieb.seiringer.19}.
Again, there are some slight differences in the case $d=2$,
which is why we give the argument here.

First, we show that a version of Newton's theorem hold for the periodic potential $G_\ell$. 
Namely that separated neutral radial charge densities have zero total interaction. 
More precisely, let $\rho$ be any compactly supported radial neutral charge distribution,
i.e. $\supp \rho \subset B_R$ for some $R > 0$, $\rho$ is radial and $\int \rho \ud x = 0$.
Let $L$ be large enough so that $B_R \subset C_L$.
Then we have that $V = \sum_{k \in \Z^2} (\rho * -\log)(\cdot + L k)$ satisfies
\[
	-\Delta V = 2\pi \sum_k \rho(\cdot + Lk) = \rho * \left(2\pi \left(\sum_k \delta_{Lk} - \frac{1}{L^2}\right)\right) = -\Delta (\rho * G_L).
\]
(Note the importance of $\rho$ being neutral, so $\rho * 1 = 0$.)
Since both $V$ and $\rho * G_L$ are $C_L$-periodic this shows that they differ by a 
periodic harmonic function, i.e. a constant. 
Moreover, by Newton's theorem we have that $V$ vanish on $C_L\setminus B_R$, thus we see that $\rho * G_L$ 
is constant on $C_L\setminus B_R$, and so for another neutral radial charge distribution $\rho'$ supported in this region
their interaction vanish,
$\iint G_L(x-y) \rho(x) \rho'(y) \ud x \ud y = 0$.\footnote{Note that
the analogous statement in \cite{lewin.lieb.seiringer.19} is wrong. 
There it is claimed that $\rho * G_L = 0$ on $C_L\setminus B_R$. 
This is not true in general. In general we only have $\rho * G_L$ constant on $C_L \setminus B_R$.
Since the result is only ever used for interactions between neutral charges (as we do here), their proof that 
$e_\text{UEG} = e_\text{per} = e_\text{Jel}$ in dimension $d=3$ still works.
}

We use the Swiss cheese theorem \cite[sect. 14.5]{lieb-seiringer} to fill (most of) the cube $C_L$ 
with balls of integer volume ranging in sizes from some fixed $\ell_0$ to a largest size of order $\ell$. 
The ratio of the volume not covered by the balls is small in comparison to the volume of the cube, 
in the sense that if we take $\ell \to \infty$ after taking $L\to \infty$ this ratio vanishes.

We now construct a trial state using these balls. In each ball $B_n$ we place $N_n = |B_n|$ 
particles in the optimal Jellium configuration for the ball $B_n$.
(Note that $n$ refers to the index of the ball, and not its radius.)
The remaining $M = N - \sum_n |B_n|$ particles are placed uniformly in the remainder $S = C_L \setminus \bigcup_n B_n$, 
meaning that we smear the particles out in this region. 
This yields
\[
\begin{aligned}
\min \mathcal{E}_{\text{per}, L}(x_1, \ldots, x_N)
	& \leq 
		\sum_{1\leq j < k \leq N-M}G_L(x_j - x_k)
		+\sum_{j = 1}^{N - M} \int_S G_L(x_j - y) \ud y
	\\
	& \quad		
		+\frac{1}{2}\left(1 - \frac{1}{M}\right) \iint_{S\times S} G_L(x-y)\ud x\ud y
		+ \frac{N}{2}\left(\log L + C_\text{mad}\right)
	\\
	& = 
		\sum_{1\leq j < k \leq N-M}G_L(x_j - x_k)
		- \sum_n \sum_{j = 1}^{N - M} \int_{B_n} G_L(x_j - y) \ud y
	\\
	& \quad
		+ \frac{1}{2}\left(1 - \frac{1}{M}\right) \sum_{n,m} \iint_{B_n\times B_m} G_L(x-y)\ud x\ud y
		+ \frac{N}{2}\left(\log L + C_\text{mad}\right), 
\end{aligned}
\]
where $x_1, \ldots, x_{N-M}$ denote the points in $\bigcup_n B_n$
and we used that $\int_{C_L} G_L \ud x = 0$. 
Now, by rotating the charges inside each of the balls separately and taking the average over all such rotations
we may use the modified Newton's theorem above to conclude that the balls don't interact with each other.
Writing $\tilde G_L$ for the rotational average of $G_L$ we thus get the upper bound
\begin{multline*}
	\sum_n 
		\left(
			\sum_{1\leq j < k \leq |B_n|} \tilde G_L \left(x_j^{(n)} - x_k^{(n)}\right) - \sum_{j=1}^{|B_n|} \int_{B_n} \tilde G_L \left(x_j^{(n)} - y\right) \ud y
		\right.
		\\
		\left.
			+ \frac{1}{2} \left(1 - \frac{1}{M}\right)\iint_{B_n \times B_n} \tilde G_L(x-y) \ud x \ud y
			+ \frac{|B_n|}{2}\left(\log L + C_\text{mad}\right)
		\right).
\end{multline*}
As $L\to \infty$ we have that $G_L(x) = -\log |x| + \log L + C_\text{mad} + o(1)$.
Plugging this into the bound above the $\frac{1}{M}$-term is $o(1)$ and the remaining
$\log L$ and $C_\text{mad}$-terms cancel.
What we are left with is the bound
\[
	\sum_n \left(\mathcal{E}_\text{Jel}\left(B_n, x_1^{(n)}, \ldots, x_{|B_n|}^{(n)}\right) + o_{L\to \infty}(1)\right),
\]
where the term $o_{L\to \infty}(1)$ depends on $\ell$, but not on $n$.
Dividing by $N=L^2$ and taking the consecutive limits $L\to \infty$, $\ell\to \infty$ and $\ell_0\to \infty$
this gives $e_\text{Jel}$, by the existence of the thermodynamic limit for Jellium \cite{sari.merlini.76}. 
We conclude that
\[
	\limsup_{N\to \infty} \min_{x_1, \ldots, x_N \in C_L} \frac{\mathcal{E}_{\text{per}, L}(x_1, \ldots, x_N)}{N}
		\leq e_\text{Jel}.
\]
Hence we have shown
\[
	e_{\text{UEG}} 
		\leq \liminf_{N\to \infty} \min_{x_1, \ldots, x_N \in C_L} \frac{\mathcal{E}_{\text{per}, L}(x_1, \ldots, x_N)}{N}
		\leq \limsup_{N\to \infty} \min_{x_1, \ldots, x_N \in C_L} \frac{\mathcal{E}_{\text{per}, L}(x_1, \ldots, x_N)}{N}
		\leq e_\text{Jel}.
\]
Thus $e_\text{UEG} = e_\text{per} = e_\text{Jel}$.

\section{Proof of Theorem 3.1}
\label{sect.lattice}
We now give the proof of \Cref{thm.lattice.jellium}.
\begin{proof}[Proof of \Cref{thm.lattice.jellium}]
First, we extend $V_s$ to complex-valued $s$ as follows.
For $s\in \C\setminus \R$ we define $V_s(x) = |x|^{-s}$.
Now define the functions $W_s$ and $\tilde W_s$ for any $s\in \C$ by
\[
W_s = V_s - 2 V_s * \mathbbm{1}_Q + V_s * \mathbbm{1}_Q * \mathbbm{1}_Q = V_s * \left(\delta - \mathbbm{1}_Q\right) * \left(\delta - \mathbbm{1}_Q\right),
\quad
\text{and}
\quad
\tilde W_s = |\cdot|^{-s}*\left(\delta - \mathbbm{1}_Q\right)*\left(\delta - \mathbbm{1}_Q\right).
\]
Note that $W_s = \tilde W_s$ if $s\in \C \setminus(-\infty,0]$, that $W_s = - \tilde W_s$ if $s < 0$,
and that $W_s = \left.\dd{s}\tilde W_s\right\vert_{s=0}$ if $s = 0$.

By a tedious but straightforward Taylor expansion one checks that $\tilde W_s(x), W_s(x) \sim |x|^{-\tRe(s)-4}$ for large $|x|$.
Let $x_1, \ldots, x_N$ be $N$ points on the lattice, say $\{x_1,\ldots,x_N\} = \mathcal{L} \cap B_R$
for some $R$,
 and set $\Omega_N = \bigcup_{j=1}^N \left(Q + x_j\right)$. 
Consider now $s < d$. Then
\begin{eqnarray*}
	\sum_{j < k} W_s(x_j - x_k)
		& = &
			\sum_{j <k } V_s(x_j - x_k)
			- 2 \sum_{j < k} \int_Q V_s(x_j - x_k - y) \ud y
		\\
		& & 
			+ \sum_{j < k} \iint_{Q \times Q} V_s(x_j - x_k - y - z) \ud y \ud z
		\\
		& = &
			\sum_{j <k } V_s(x_J - x_k)
			-\left(
				\sum_{j=1}^N \int_{\Omega_N} V_s(x_j - y) \ud y
				-N \int_Q V_s(y) \ud y
			\right)
		\\
		& &
				+ \frac{1}{2}\iint_{\Omega_N \times \Omega_N} V_s(y-z) \ud y \ud z
				- \frac{1}{2} \sum_{j=1}^N \iint_{Q\times Q} V_s(y-z) \ud y \ud z
		\\
		& = &
			\mathcal{E}_{\text{Jel}, d, s}(\Omega_N, x_1, \ldots, x_N) + N\int_Q V_s(y) \ud y - \frac{N}{2}\iint_{Q\times Q} V_s(y-z) \ud y \ud z.
\end{eqnarray*}
For $s > d-4$ the sum $\sum_{x\in \mathcal{L}\setminus 0} W_s(x)$ converges 
and so we may take the thermodynamic limit.
\[
	e_{\text{Jel}, s}^\mathcal{L} = \lim_N \frac{1}{N}\sum_{j < k} W_s(x_j - x_k) - \int_Q V_s(y) \ud y + D_{d,s}(\mathbbm{1}_Q)
		= \sum_{x\in \mathcal{L}\setminus 0} W_s(x) - \int_Q V_s(y) \ud y + D_{d,s}(\mathbbm{1}_Q).
\]
Let now $s\in \C, \text{Re}(s) > d$. We may then write 
\begin{align*}
\zeta_{\mathcal{L}}(s) 
	& = 
		\frac{1}{2} \sum_{x\in \mathcal{L}\setminus 0} \frac{1}{|x|^s}
	\\
	& = 
		\frac{1}{2} \sum_{x\in \mathcal{L}\setminus 0} W_s(x)
		+ \sum_{x\in \mathcal{L}\setminus 0} \int_Q V_s(x - y) \ud y
		-\frac{1}{2} \sum_{x\in \mathcal{L}\setminus 0} \iint_{Q\times Q} V_s(x-y-z) \ud y \ud z
	\\
	& = 
		\frac{1}{2} \sum_{x\in \mathcal{L}\setminus 0} W_s(x) 
		+ \int_{\R\setminus Q} V_s(y) \ud y
		- \frac{1}{2} \int_{\R\setminus Q}\int_Q V_s(y-z) \ud y \ud z.
\end{align*}
We now want to write this in a form, that makes sense for all $s\ne d$ satisfying $\tRe(s) > d-4$.

For any fixed $\eps > 0$ such that $B_\eps \subset Q$ we have
\begin{align*}
\int_{\R \setminus Q} V_s(y) \ud y 
	& = 
		\int_{|y| > \eps} V_s(y) \ud y - \int_{Q\setminus B_\eps} V_s(y) \ud y
	\\
	& = 
		\abs{S^{d-1}}\eps^{d-s} \frac{1}{s-d} - \int_{Q\setminus B_\eps} \frac{1}{|y|^s} \ud y.
\end{align*}
Both of these terms are holomorphic for $s \ne d$ for any fixed $\eps > 0$.
For $\tRe(s) < d$ we may take $\eps \to 0$. Hence the analytic continuation of this term is
$-\int_Q |y|^{-s} \ud y$ when $\tRe(s) < d$. 

For the second term we write
\[
	\int_{\R \setminus Q} \int_Q \frac{1}{|y-z|^s} \ud y \ud z = \int_Q \int_{\R \setminus (Q + z)} \frac{1}{|w|^s} \ud w \ud z.
\]
We now split this integral according to
\begin{align*}
	\int_Q \int_{\R \setminus (Q + z)} 
		& =
			\int_Q \int_{|w| > \eps |z|} - \int_Q \int_{(Q+z)\setminus B_{\eps |z|}} 
		\\
		& = 
			\int_{Q\setminus B_\delta}\int_{|w| > \eps |z|} 
			+ \int_{B_\delta}\int_{|w| > \eps |z|}
			- \int_{Q\setminus B_\delta} \int_{(Q+z)\setminus B_{\eps|z|}}
			- \int_{B_\delta} \int_{(Q+z)\setminus B_\rho}
			- \int_{B_\delta} \int_{B_\rho \setminus B_{\eps|z|}}
\end{align*}
where $\eps, \delta, \rho > 0$ are all sufficiently small.
The terms 
\[
	\int_{Q\setminus B_\delta} \int_{(Q+z)\setminus B_{\eps|z|}} |w|^{-s} \ud w \ud z, 
	\qquad 
	\int_{B_\delta} \int_{(Q+z)\setminus B_\rho} |w|^{-s} \ud w \ud z
\]
are analytic in $s$. For the remaining terms we calculate.
The term
\[
	\int_{Q\setminus B_\delta}\int_{|w| > \eps |z|} |w|^{-s} \ud w \ud z
		= \abs{S^{d-1}}\frac{1}{s-d}\eps^{d-s}\int_{Q \setminus B_\delta} |z|^{d-s} \ud z
\]
makes sense for $\tRe(s) > d$ and extends analytically to $s \ne d$.
The term
\[
	\int_{B_\delta}\int_{|w| > \eps |z|} |w|^{-s} \ud w\ud z
	= \abs{S^{d-1}}^2 \frac{1}{s-d}\frac{1}{2d-s} \eps^{d-s} \delta^{2d -s}
\]
makes sense for $d < \tRe(s) < 2d$ and extends analytically to $s\ne d, 2d$.
The term
\[
	 \int_{B_\delta} \int_{B_\rho \setminus B_{\eps|z|}} |w|^{-s} \ud w\ud z
	 = \abs{S^{d-1}} \abs{B_1} \frac{1}{d-s}\delta^d \rho^{d-s} 
	 	- \abs{S^{d-1}}^2 \frac{1}{d-s}\eps^{d-s} \frac{1}{2d-s}\delta^{2d-s}
\]
makes sense for $d < \tRe(s) < 2d$ and extends analytically to $s\ne d, 2d$.
The ``poles'' at $s = 2d$ in fact cancel out, so $2d$ is not a pole of $\zeta_\mathcal{L}(s)$.

For $\tRe(s) < d$ we may take $\eps, \delta, \rho \to 0$ in a suitable order. 
All these terms combined then give the limit 
\[
	-\int_Q \int_{Q+z} |w|^{-s} \ud w \ud z = - \iint_{Q\times Q} \frac{1}{|w-z|^s} \ud w \ud z.
\]
Thus, for $d-4 < \tRe(s) < d$ we have that (for the analytic continuation)
\[
\zeta_\mathcal{L}(s) = \frac{1}{2}\sum_{x\in \mathcal{L}\setminus0} \tilde W_s(x) - \int_Q \frac{1}{|y|^s} + \frac{1}{2}\iint_{Q\times Q} \frac{1}{|y-z|^s} \ud y \ud z.
\]
Thus for real $d-4 < s < d$ with $s \ne 0$ we have
\[
\zeta_\mathcal{L}(s)
	= \begin{cases}
	e_{\text{Jel}, s}^\mathcal{L} & \text{if } s > 0, \\
	- e_{\text{Jel}, s}^\mathcal{L} & \text{if } s < 0.
	\end{cases}
\]
For $s=0$ we have $W_{s=0} = \left.\dd{s} \tilde W_s \right\vert_{s=0}$ and similarly 
$V_{s=0} = \left.\dd{s} |\cdot|^{-s} \right\vert_{s=0}$. Thus 
\begin{multline*}
	e_{\text{Jel}, s=0}^\mathcal{L}
		= \sum_{x\in \mathcal{L}\setminus 0} W_{s=0}(x) - \int_Q V_{s=0}(y) \ud y + D_{d,s=0}(\mathbbm{1}_Q)
	\\
		= \dd{s} 
				\left[
			\sum_{x\in \mathcal{L}\setminus 0} \tilde W_{s}(x) 
			- \int_Q \frac{1}{|y|^s} \ud y 
			+ \frac{1}{2}\iint_{Q\times Q} \frac{1}{|y-z|^s} \ud y \ud z
				\right]_{s=0}
		= \zeta_\mathcal{L}'(0).
\end{multline*}
This finishes the proof of \Cref{thm.lattice.jellium}.
\end{proof}

\section*{Acknowledgments}
The author would like to thank Robert Seiringer 
for guidance and many helpful comments on this project.
Additionally the author would like to thank 
Mathieu Lewin for his comments on the manuscript
and 
Lorenzo Portinale 
for providing his lecture notes for the course ``Mathematics of quantum many-body systems''
in spring 2020,
taught by Robert Seiringer. The proof of \Cref{thm.lattice.jellium}
is inspired by these lecture notes.

\appendix
\section{Proof of Proposition 5.5}
\begin{proof}[Proof of \Cref{prop.d.convolve.dipole.=0}]
Let $L = |A|^{1/2}$ be the side length of $A$, find an $\ell' \geq 4\ell$ of order 1 such that $L/\ell'$ is an integer.
Let $Q$ denote the square of side length $\ell'$ centered at zero. Tile the plane with translates of $Q$ 
such that, for the relevant translates, the centers $y_j$ lie on the boundary $\partial A$. 
That is,
$\R^2 = \bigcup_j (y_j + Q)$ and if $y_j + Q$ intersect both $A$ and $A^c$, then $y_j \in \partial A$.
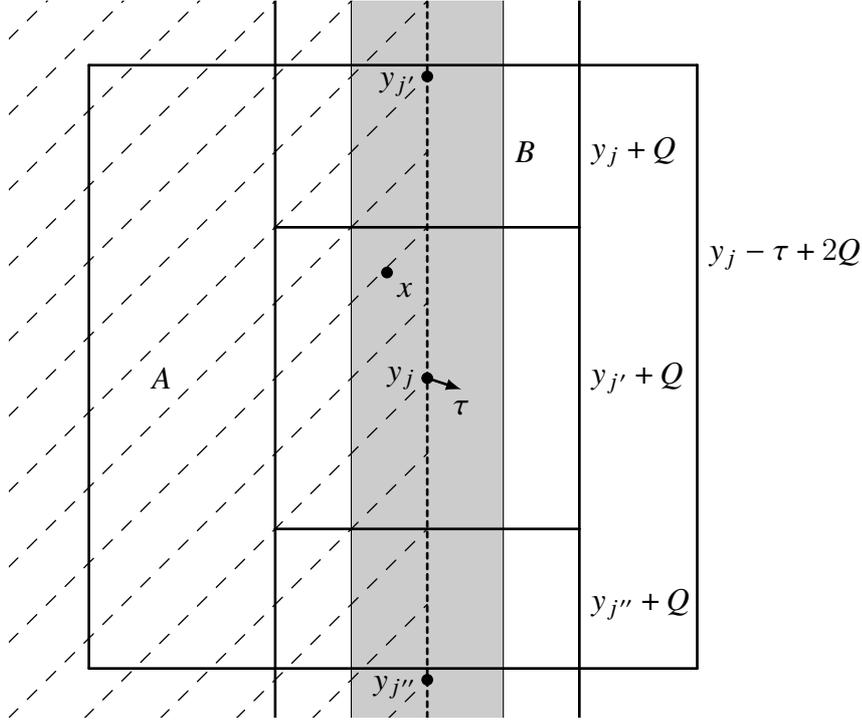
\begin{figure}[htb]
\center
\definecolor{light grey}{rgb}{0.8,0.8, 0.8}
\begin{tikzpicture}[line cap=round,line join=round,>=triangle 45,x=1 cm,y=1 cm]
\clip(-5.5,-4.5) rectangle (7.,5.);
\fill[line width=0.1 pt,fill=light grey] (-1.,7.) -- (1.,7.) -- (1.,-7.) -- (-1.,-7.) -- cycle;
\foreach \x in {-14, ..., 14}{
	\draw[line width = 0.5 pt, dash pattern = on 5 pt off 5 pt] (-7, \x) -- (0, \x + 7);
}
\draw [line width = 1 pt,dash pattern=on 2 pt off 2 pt] (0.,-4.5) -- (0.,5.);
\draw [line width = 1 pt] (2.,-2.)-- (2.,2.);
\draw [line width = 1 pt] (2.,2.)-- (-2.,2.);
\draw [line width = 1 pt] (-2.,2.)-- (-2.,-2.);
\draw [line width = 1 pt] (-2,-2) -- (2, -2);
\draw [line width = 1 pt] 
	(-2,-2) -- (-2, -6)
	(-2, -6) -- (2, -6)
	(2, -6) -- (2, -2)
	(2,2) -- (2,6)
	(2,6) -- (-2, 6)
	(-2,6) -- (-2,2);
\draw [line width = 1 pt] (3.55,-3.85)-- (3.55,4.15);
\draw [line width = 1 pt] (3.55,4.15)-- (-4.45,4.15);
\draw [line width = 1 pt] (-4.45,4.15)-- (-4.45,-3.85);
\draw [line width = 1 pt] (-4.45, -3.85) -- (3.55, -3.85);
\draw [line width=0.4 pt] (-1.,7.)-- (1.,7.);
\draw [line width=0.4 pt] (1.,7.)-- (1.,-7.);
\draw [line width=0.4 pt] (1.,-7.)-- (-1.,-7.);
\draw [line width=0.4 pt] (-1.,-7.)-- (-1.,7.);
\draw [-{latex[length = 1mm]}, line width = 1 pt] (0,0) -- (0.45, -0.15);
\draw (2,0) node[anchor= west] {$y_{j'} + Q$};
\draw (2,3) node[anchor= west] {$y_j + Q$};
\draw (2,-3) node[anchor= west] {$y_{j''} + Q$};
\draw (3.55, 2) node[anchor=north west] {$y_j - \tau + 2Q$};
\draw (1,3) node[anchor= west] {$B$};
\draw (0.45,-0.15) node[anchor=north] {$\tau$};
\draw (-0.53,1.4) node[anchor=north west] {$x$};
\draw (0,0) node[anchor=east] {$y_j$};
\draw (0,3.9) node[anchor=east] {$y_{j'}$};
\draw (0,-4.1) node[anchor=east] {$y_{j''}$};
\draw (-3.5,0) node {$A$};
\begin{scriptsize}
\draw [fill=black] (0.,0.) circle (2 pt);
\draw [fill = black] (0, -4) circle (2 pt);
\draw [fill = black] (0, 4) circle (2 pt);
\draw [fill=black] (-0.53,1.4) circle (2 pt);
\end{scriptsize}
\end{tikzpicture}
\caption{Picture of the boundary region $B$ (in grey) with the relevant translates of $Q$.}
\label{figure.boundary.A.construction}
\end{figure}
Now, for any $x\in B$ we have that $x \in y_j + Q$ for some (unique) $y_j \in \partial A$, see \Cref{figure.boundary.A.construction}. Thus
\[
	D\left(\mathbbm{1}_{A + \tau} - \mathbbm{1}_A * \mu, b\right)
		= - \frac{1}{2}\sum_{j: y_j \in \partial A} 
			\int_{y_j + Q} b(x) \int \log|x-y| \left(\mathbbm{1}_{A+\tau} - \mathbbm{1}_A * \mu\right)(y) \ud y \ud x
\]
We now split the $y$-integral in two according to whether $y$ is ``close'' to $x$, namely if $y \in y_j + 2Q$ or if $y$ is  ``far'' from $x$, namely if $y\notin y_j + 2Q$. 
For the close $y$'s we get the contribution
\[
	- \frac{1}{2}\sum_{j: y_j \in \partial A} 
		 \int_{y_j + Q} b(x) \int_{y_j + 2Q} \log|x-y| \left(\mathbbm{1}_{A+\tau} - \mathbbm{1}_A * \mu\right)(y) \ud y \ud x
		= O\left(N^{1/2}\right)
\]
since there are $O(N^{1/2})$ many such $j$'s, with each some order 1 contribution. For the $y$'s far away we get the contribution 
\[
	- \frac{1}{2}\sum_{j: y_j \in \partial A} \int_{y_j + Q} b(x) \int_{(y_j + 2Q)^c} \log|x-y|
		\left[\mathbbm{1}_{A+\tau}(y) - \int \mathbbm{1}_A(y-a) \ud \mu(a)\right] \ud y \ud x
\]
we compute
\begin{eqnarray*}
	& = & 
		- \frac{1}{2} \sum_j \int_{y_j + Q} b(x) \int 
			\left[
				\int_{(y_j - \tau + 2Q)^c \cap A} \log|x-y - \tau| \ud y 
			\right.
	\\
	&&
			\left.
			\phantom{\frac{1}{2} \sum_j \int_{y_j + Q} b(x) \int}
				- \int_{(y_j - a + 2Q)^c\cap A}  \log|x-z-a| \ud z
			\right] \ud \mu(a) \ud x
	\\
	& = &
		- \frac{1}{2} \sum_j \int_{y_j + Q} b(x) \int 
			\int_{(y_j - \tau + 2Q)^c \cap A} \log|x-y - \tau| - \log|x-y-a| \ud y \ud \mu(a) \ud x
	\\
	& & 
		- \frac{1}{2} \sum_j \int_{y_j + Q} b(x) \int
			\left[\int_{(y_j - \tau +2Q)^c\cap A} - \int_{(y_j - a + 2Q)^c\cap A}\right]  \log|x-y-a| \ud y \ud \mu(a) \ud x.
\end{eqnarray*}
The first term here, we again use the Taylor expansion of $\log$. we thus get for the integrand in the $x$-integral
\begin{multline*}
	b(x) \int \int_{(y_j - \tau + 2Q)^c \cap A}
		\frac{(x-y-\tau) \cdot (\tau - a)}{|x-y-\tau|^2} + O\left(\frac{1}{|x-y|^2}\right) \ud y \ud \mu(a)
	\\
		= O\left(\int_{(y_j - \tau + 2Q)^c\cap A} \frac{1}{|x-y|^2} \ud y\right).
\end{multline*}
Thus, computing the $x$-integral of this we get a term which is $O(N^{1/2}\log N)$ by \Cref{prop.n.1/2.logn}
(note that for $y\in (y_j - \tau  + 2Q)^c \cap A$ and $x\in y_j + Q$ we have that $|x-y| \geq \ell$).
For the second term, the $x$-integrand is
\begin{multline*}
	- b(x) \int \left[\int_{(y_j - \tau + 2Q)^c\cap A} - \int_{(y_j - a + 2Q)^c \cap A}\right] \log|x-y-a| \ud y \ud \mu(a)
	\\ 
	= - b(x) \int \left[\int_{(y_j - a + 2Q) \cap A} - \int_{(y_j - \tau + 2Q) \cap A} \right] \log|x-y-a| \ud y \ud \mu(a)
\end{multline*}
This is only an integral of $y$'s ``close'' to $x$, and so a similar argument as above gives that 
when we integrate this over all $x$ we get a term which is $O(N^{1/2})$. 
We conclude the desired
\[
	D\left(\mathbbm{1}_{A + \tau} - \mathbbm{1}_A * \mu, b\right)
	= O\left(N^{1/2}\log N\right).
	\qedhere
\]
\end{proof}

\section{Lower Bound of the Jellium Energy}
We here present the proof of the lower bound of the Jellium energy from \cite{lieb.narnhofer.75,sari.merlini.76}.
\begin{prop}[{\cite{lieb.narnhofer.75,sari.merlini.76}}]
Let $\Omega_N$ be any domain with $|\Omega_N| = N$ and let $x_1,\ldots,x_N \in \Omega_N$ 
be any configuration of points. Then
\[
	\mathcal{E}_\textnormal{Jel}(\Omega_N, x_1,\ldots,x_N) \geq - \left(\frac{3}{8} + \frac{1}{4} \log \pi\right) N \simeq -0.66118 \, N.
\]
In particular $e_\textnormal{Jel} \geq - \left(\frac{3}{8} + \frac{1}{4} \log \pi\right) \simeq -0.66118.$
\end{prop}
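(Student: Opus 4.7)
The plan is to replace each point charge $\delta_{x_j}$ by the uniformly-charged disk $\chi_{r,j}(y) := \frac{1}{\pi r^2}\mathbbm{1}_{B(x_j,r)}(y)$ of radius $r>0$ (to be optimized at the end), and then to combine \Cref{prop.d(f).geq.0} with a Newton-style estimate of the error introduced by the smearing.

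First one computes the two basic objects attached to $\chi_r$. Solving Poisson's equation radially (or using Newton's theorem) yields the potential $\phi_r := (-\log|\cdot|)*\chi_r$, equal to $-\log|x|$ when $|x|\geq r$ and to $-\log r + (r^2-|x|^2)/(2r^2)$ when $|x|\leq r$; the self-energy works out to $D(\chi_r) = -\tfrac{1}{2}\log r + \tfrac{1}{8}$. Since $-\log|\cdot|$ is superharmonic on $\R^2$ one has the pointwise bounds $\phi_r \leq -\log|\cdot|$ and $\Phi_r := \phi_r\ast\chi_r \leq -\log|\cdot|$, the latter giving $2D(\chi_{r,j},\chi_{r,k}) = \Phi_r(x_j-x_k)\leq -\log|x_j-x_k|$. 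A short explicit calculation furnishes the auxiliary identity
\[
\int_{\R^2}\bigl(-\log|z| - \phi_r(z)\bigr)\ud z = \frac{\pi r^2}{4}.
\]

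Now expand $D(\mu_r)$ for the neutral density $\mu_r := \sum_{j=1}^N \chi_{r,j} - \mathbbm{1}_{\Omega_N}$ and compare term-by-term with $\mathcal{E}_\textnormal{Jel}$. The particle-particle contribution is handled by $-\log|x_j-x_k| \geq 2D(\chi_{r,j},\chi_{r,k})$; the particle-background contribution by
\[
2D(\delta_{x_j},\mathbbm{1}_{\Omega_N}) - 2D(\chi_{r,j},\mathbbm{1}_{\Omega_N}) = \int_{\Omega_N}\bigl(-\log|x_j-y| - \phi_r(x_j-y)\bigr)\ud y \leq \frac{\pi r^2}{4},
\]
the last inequality obtained by enlarging the integration domain to $\R^2$, the integrand being nonnegative and supported in $B(x_j,r)$. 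Combining these two estimates with $D(\mu_r)\geq 0$ from \Cref{prop.d(f).geq.0} yields
\[
\mathcal{E}_\textnormal{Jel}(\Omega_N,x_1,\ldots,x_N) \geq -N\,D(\chi_r) - \frac{N\pi r^2}{4} = N\left(\tfrac{1}{2}\log r - \tfrac{1}{8} - \tfrac{\pi r^2}{4}\right).
\]

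Finally, maximizing the right-hand side in $r>0$ is a one-variable exercise: the derivative $\tfrac{1}{2r}-\tfrac{\pi r}{2}$ vanishes at $r=1/\sqrt{\pi}$, producing the sharp value $-N\bigl(\tfrac{3}{8}+\tfrac{1}{4}\log\pi\bigr)$. The main technical point is the particle-background estimate: what makes it work uniformly in $\Omega_N$ and in the configuration $\{x_j\}$---with no assumption such as $B(x_j,r)\subset\Omega_N$---is precisely that the integral of $-\log|z|-\phi_r(z)$ over \emph{all} of $\R^2$ is a universal constant depending only on $r$. Using point charges directly would reintroduce the divergent self-energy that the smearing is designed to cure.
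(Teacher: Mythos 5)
Your proposal is correct and is essentially the paper's own argument (the Lieb--Narnhofer/Sari--Merlini smearing proof): smear each electron into a uniform disk, use \Cref{prop.d(f).geq.0} for the neutral smeared-plus-background density, use superharmonicity/Newton's theorem to compare point--point with ball--ball and particle--background with ball--background interactions, and optimize over the radius. The only differences are cosmetic bookkeeping: you bound the particle--background error by extending the integral of $-\log|z|-\phi_r(z)$ to all of $\R^2$ (giving $\pi r^2/4$), whereas the paper restricts the same nonpositive integrand to the ball, yielding the identical constant and the identical final bound $-\left(\tfrac{3}{8}+\tfrac{1}{4}\log\pi\right)N$.
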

\begin{proof}
The idea is to smear out the electrons to a ball of radius $a$ of uniform charge.
Then optimise the result over the radius $a$. 
Define
\begin{align*}
U_{BB} 
	& := - \frac{1}{2} \iint_{\Omega_N \times \Omega_N} \log|x-y| \ud x \ud y, 	
		&& \text{the background self-energy},
	\\
U_j  
	& := \int_{\Omega_N} \log|x_j - y| \ud y,
		&& \text{particle $j$ - background interaction},	
	\\
U_{jk}
	& := - \log|x_j - x_k|,
		&& \text{particle $j$ - particle $k$ interaction},
	\\
\hat U_j
	& := \frac{1}{|B_a|} \int_{B(x_j, a)} \int_{\Omega_N} \log|x-y| \ud x \ud y,
		&& \text{ball $j$ - background interaction},
	\\
\hat U_{jk}
	& := - \frac{1}{|B_a|^2} \int_{B(x_j, a)} \int_{B(x_k, a)} \log|x-y| \ud x\ud y,
		&& \text{ball $j$ - ball $k$ interaction}.
\end{align*}
Then $\hat U_{jj}$ is twice the self-energy of ball $j$.
We then write
\[
	\mathcal{E}_\text{Jel} (\Omega_N, x_1, \ldots, x_N)
		= 	
			\underbrace{U_{BB} + \sum_{j=1}^N \hat U_j + \frac{1}{2}\sum_{j,k} \hat U_{jk}}_{(\alpha)}
			+ \underbrace{\sum_{j=1}^N U_j - \hat U_j}_{(\beta)}
			+ \underbrace{\frac{-1}{2}\sum_{j=1}^N \hat U_{jj}}_{(\gamma)}
			+ \underbrace{\sum_{j < k} U_{jk} - \hat U_{jk}}_{(\delta)}.
\]
Now, $(\alpha)$ is the total electrostatic energy of the combined charge distribution of the smeared out electrons and the background,
i.e. $(\alpha) = D\left(\sum_{j=1}^N \frac{1}{|B_a|} \mathbbm{1}_{B(x_j, a)} - \mathbbm{1}_\Omega\right)$.
Since the entire configuration is neutral, we have $(\alpha) \geq 0$ by \Cref{prop.d(f).geq.0}.
Also $(\delta)\geq 0$, since if the balls are not overlapping, then this term is 0, but if they are overlapping, 
then by Newton's theorem this term is positive.
Now, $(\beta)$ can be bounded by Newton's theorem
\begin{align*}
U_j - \hat U_j
	& = \int_{\Omega_N} \log|x_j - z| - \frac{1}{|B_a|}\int_{B(x_j, a)} \log|x-z| \ud x \ud z
	\\
	& \geq \frac{1}{|B_a|} \int_{B(x_j, a)}\int_{B(x_j, a)} \log|x_j - z| - \log |x-z| \ud x \ud z
	\\
	& = \frac{1}{|B_a|} \iint_{B_a \times B_a} \log|z| - \log|x-z| \ud x \ud z.
\end{align*} 
We have equality if $B(x_j, a)\subset \Omega_N$, but in general always the stated inequality.
Lastly, $(\gamma)$ is given by $(\gamma) = \frac{N}{2|B_a|^2}\iint_{B_a \times B_a} \log|x-y| \ud x \ud y$.

Computing $(\gamma)$ and the bound for $(\beta)$ we arrive at
$\mathcal{E}_{\text{Jel}}	\geq \left(\frac{1}{2}\log a - \frac{1}{8} - \frac{\pi}{4}a^2\right)N$.
By optimising over $a$ we thus get
$\mathcal{E}_{\text{Jel}}
	\geq -\left(\frac{3}{8} + \frac{1}{4}\log \pi\right) N 
$ as desired.
\end{proof}

\sloppy
\printbibliography
\end{document}